\newif\ifarxiv
\newtheorem{theorem}{Theorem}
\newtheorem{lemma}{Lemma}
\newtheorem{remark}{Remark}
\newtheorem{definition}{Definition}
\newtheorem{assumption}{Assumption}
\newcommand{\R}{\mathbb{R}}
\newcommand{\N}{\mathcal{N}}
\newcommand{\A}{\mathcal{A}}
\newcommand{\X}{\mathcal{X}}
\newcommand{\E}{\mathbb{E}}
\newcommand{\xbar}{\bar{x}}
\newcommand{\mm}{\mathcal{M}}
\newcommand{\Tb}{\mathbf{T}}
\newcommand{\abold}{\bm{\alpha}}
\newcommand{\Vhat}{\widehat{V}}
\newcommand{\Jhat}{\widehat{J}}
\newcommand{\pibar}{\bar{\pi}}
\newcommand{\Vbar}{\overline{V}}
\DeclareMathOperator*{\argmax}{arg\,max}
\newcommand{\abar}{\bar{\abold}}
\newcommand{\rbar}{\bar{r}}
\newcommand{\rmax}{r_{\text{max}}}
\newcommand{\muhat}{\widehat{\mu}}
\newcommand{\xnot}{\{x\}_{-\N_W}}
\newcommand{\Jmaxt}{\tilde{J}_{\max}}
\newcommand{\blue}[1]{\textcolor{black}{#1}}
\newcommand{\bluetoo}[1]{\textcolor{black}{#1}}
\newcommand{\bluethree}[1]{\textcolor{black}{#1}}
\begin{document}
\title{
Model-Free Learning and Optimal Policy Design in Multi-Agent MDPs Under Probabilistic Agent Dropout

}


\author{Carmel Fiscko$^{1}$, Soummya Kar$^{2}$, and Bruno Sinopoli$^{3}$
\thanks{$^{1}$Carmel Fiscko is with the Dept. of Electrical and Computer Engineering at Carnegie Mellon University at 5000 Forbes Ave, Pittsburgh, PA 15213 and the Dept. of Systems Engineering at Cornell University in Ithaca, NY. {\tt\small cfiscko@cornell.edu}. This material is based upon work supported by the National Science Foundation Graduate Research Fellowship Program under Grant No. DGE1745016. Any opinions, findings, and conclusions or recommendations expressed in this material are those of the author(s) and do not necessarily reflect the views of the National Science Foundation. Additional support from the Hsu Chang Memorial Fellowship in ECE.}%
\thanks{$^{2}$Soummya Kar is with the Dept. of Electrical and Computer Engineering at Carnegie Mellon University at 5000 Forbes Ave, Pittsburgh, PA 15213. {\tt\small soummyak@andrew.cmu.edu}}%
\thanks{$^{3}$Bruno Sinopoli is with the Dept. of Electrical and Systems Engineering at Washington University in St. Louis, MO at 1 Brookings Dr, St. Louis, MO 63130. {\tt\small bsinopoli@wustl.edu }}%
}

\maketitle

\begin{abstract}
This work studies a multi-agent Markov decision process (MDP) that can undergo agent dropout and the computation of policies for the post-dropout system based on control and sampling of the pre-dropout system. The \bluethree{central planner}'s objective is to find an optimal policy that maximizes the value of the expected system given a priori knowledge of the agents' dropout probabilities. For MDPs with a certain transition independence and reward separability structure, we assume that removing agents from the system forms a new MDP comprised of the remaining agents with new state and action spaces, transition dynamics that marginalize the removed agents, and rewards that are independent of the removed agents. We first show that under these assumptions, the value of the expected post-dropout system can be represented by a single MDP; this ``robust MDP" eliminates the need to evaluate all $2^N$ realizations of the system, where $N$ denotes the number of agents. More significantly, in a model-free context, it is shown that the robust MDP value can be estimated with samples generated by the pre-dropout system, meaning that robust policies can be found before dropout occurs. This fact is used to propose a policy importance sampling (IS) routine that performs policy evaluation for dropout scenarios while controlling the existing system with good pre-dropout policies. The policy IS routine produces value estimates for both the robust MDP and specific post-dropout system realizations and is justified with exponential confidence bounds. Finally, the utility of this approach is verified in simulation, showing how structural properties of agent dropout can help a controller find good post-dropout policies before dropout occurs.

\end{abstract}

\ifarxiv
\else
\begin{IEEEkeywords}
multi-agent systems, markov decision processes, policy importance sampling
\end{IEEEkeywords}
\fi
\section{Introduction}
\label{sec:introduction}
\bluethree{Research in the modeling and control of multi-agent systems (MAS) attempts to describe how the decisions of individuals translates into group behavior. For example, consider social media interactions \cite{le2022socialbots}, energy management of microgrids \cite{fang2021multi}, and cooperative control of autonomous vehicles \cite{chen2021graph}. Related problems include consensus formation \cite{zhu2023neural}, adversarial agents \cite{zhang2020robust}, and the related computational challenges as the number of agents grows \cite{carmona2023model}. Regulators who aim to achieve objectives on the system must therefore understand these interactions in designing satisfactory control policies.}

\bluethree{In this work we consider hierarchical control of MAS behavior where the agents' decision processes are affected by a central planner (CP). The agents select actions in pursuit of local objectives, but these decisions are impacted by a global control, such as a bandwidth budget or pricing schemes. The CP selects actions with the purpose of corralling the agents' choices and thus controlling the resulting stochastic process to a desired outcome. If the objective of the CP is aligned with those of the agents, then this formulation reduces to a centralized solution approach to find local policies. To achieve this goal, the CP considers their control capabilities, the agents' actions, and the problem  limitations in terms of visibility, data, and computation ability. }

\bluethree{A Markov decision process (MDP) model may be constructed whereby the state space describes the agents' behavior, the action space is the CP's controls, and the reward function encodes the CP's control objectives. The state-action to state transitions describe the agents' local decision-making, which may be known from a given model or may be learned from prior reinforcement learning (RL) or multi-agent reinforcement learning (MARL) implementations \cite{shapley1953stochastic} \cite{yang2020overview}. Given the MDP model, the CP can solve for a global policy to achieve their control objectives \cite{Fiscko2019ControlOP}. If each agent samples their next state independently given the current state, the MDP model can be expressed as a factored \cite{osband2014near} or transition-independent MDP \cite{becker2003transition}, which reduces the problem scale \cite{fiscko2022cluster}.}

Often, theoretical guarantees on policy performance require the MDP model to be stationary, which is not reasonable in practice.  One solution is to learn system parameters such as the transition matrix or Q values and update them over time; however, this time-delayed model induces lag and lacks strong theoretical guarantees. Recent investigations have assumed the MDP is stationary within discrete time units, identified the system for each block of time, and chosen a policy based on the temporary model \cite{ornik2019learning} \cite{lecarpentier2019non} \cite{cheung2020reinforcement}.

\ifarxiv\else\vspace{-2mm}\fi\subsection{Agent Dropout}
\bluethree{In this paper, we focus on a specific type of non-stationarity: removal of agents from the MAS. Agent dropout occurs when, after some time has passed under normal operation, an agent or group of agents leaves the system. This changes the fundamental structure of the MAS, changing all aspects of the MDP model. A control policy based on the full set of participating agents may not yield good value if dropout occurs; in the worst case, the policy may induce unsafe system operation. For example, consider a coverage problem in which a centrally controlled swarm of drones  maintains a formation where each drone is equidistant from one another. If a drone were to lose power or connectivity, the CP would need to react to maintain the same objective for the reduced number of agents. Versions of this problem in terms of the graph edges have been studied in \cite{summers2009addressing} and \cite{gasparri2017bounded}. If dropout occurs and the CP uses Q-learning, it is likely that finding a tolerable post-dropout policy will take too long.}

\bluethree{Agent dropout has previously been studied as link failure within a communication graph. For example, in distributed settings, update protocols have been developed that are robust to stochastic networks \cite{kar2013cal}, \cite{zhang2018fully}. These papers necessitate the assumption is that the graph is fully connected on average, thus ensuring convergence of the agents' value estimates. This work focuses on a related but different formulation: we make no such connectivity assumption, and consider adding or removing an agent as the initialization of a \emph{new} MDP with a redefined state space, action space, reward function, and dynamics. In this case, the assumption of average full connectivity over time will be violated.} 

\bluethree{In this paper, we consider a probabilistic form of the dropout problem, where the CP knows each agent's probability of dropout \emph{a priori}. In this case of structural change, the objective is to solve for an optimal policy for the \emph{post-dropout} system based on samples collected from \emph{pre-dropout} system. For applications in which policy switching is not feasible, we also wish to find policies that produce good value for both the pre-dropout and the post-dropout systems.}

\bluethree{To give the problem tractability, we assume the following structure to link the post-dropout MDP to its pre-dropout counterpart. First, we assume that the system under no dropout is the nominal graph, from which edges are selected based on agent attendance. Second, each agent has an independent probability of leaving the system, which is known \emph{a priori}. Third, the new transition kernel is assumed to be the corresponding marginalization of the original function. Finally, no reward is associated with removed agents.}


\bluethree{Under these assumptions, the model-based version of the stated objectives are straightforward to solve, and so we focus on model-free solution techniques. Standard RL methods require two basic steps of policy evaluation and policy search. For example, one method for safe policy improvement is to evaluate policies with a high probability confidence bound and select new policies associated with high lower bounds \cite{thomas2015high}. Attempting to perform policy evaluation on the post-dropout system, however, is not straightforward, as we can only sample from the pre-dropout system; the post-dropout system does not yet exist. In addition, the policy search step is difficult, as testing post-dropout policies on the pre-dropout system may result in poor performance.}

\bluethree{We studied a form of this problem in \cite{fiscko2022confident}, in which we proposed a policy evaluation method based on policy importance sampling (IS) \cite{jiang2016doubly} for deterministic dropout of a single agent. Policy IS is desirable for the dropout problem as it successfully avoids the issue of sampling from the pre-dropout system with a bad policy. Policy IS cannot be applied directly to this problem, however, as the pre- and post-dropout systems have different models, meaning the IS estimator cannot be evaluated. The proposed solution enabled IS by leveraging structural connections between the pre-and post-dropout MDPs. We demonstrated that the target policy can be represented in the dimension of the pre-dropout policy; then, standard IS can be applied to pre-dropout trajectories, and marginalization to finds the value post-dropout. }

\ifarxiv\else\vspace{-3mm}\fi\subsection{Main Contributions}
\bluethree{In this paper, we expand on the initial results of \cite{fiscko2022confident} to consider probabilistic dropout of many agents. We present a probabilistic dropout problem, in which a policy that maximizes the value of the MDP over its expected agent composition is desired. Note that finding policies for any specific dropout realization is a special case of this problem. }

\bluethree{A naive approach would estimate a policy's value for each system realization and then take the expectation; however, $N$ agents leads to $2^N$ system realizations with values needing estimation. As each estimate needs a dataset of trajectories for IS, this approach quickly becomes untenable.}

\bluethree{In this paper, we show that an alternate approach for solving the probabilistic dropout problem is to define an equivalent single MDP, which we call the \emph{robust MDP}. We develop a policy IS technique to estimate values for the robust MDP. Thus, this formulation may be used to evaluate control policies for post-dropout realizations given samples generated by the pre-dropout MDP. This avoids the complexity issues of the naive approach, and reduces the scale of the problem. }

In Section \ref{sec: prelim}, the multi-agent MDP model is defined. The probabilistic agent dropout problem is defined in Section \ref{sec: problem statement}. The robust MDP formulation is presented in Section \ref{sec:robust}, and the model-free policy IS method and its performance are discussed in Section \ref{sec: policy evaluation}. Finally, simulations are in Section \ref{sec: sims}.

\section{Preliminaries} \label{sec: prelim}
\subsection{Multi-Agent MDPs}
Consider a multi-agent system modeled as a \emph{Markov Decision Process} (MDP) $\mathcal{M}=(\X, \A, r, T, \gamma)$ consisting of the state space of the system $\X$, the action space $\A$, a reward function $r:\X\times\A\to\mathbb{R}$, a probabilistic state-action to state transition function $T:\X\times\A \to\Delta(\X)$, and a discount parameter $\gamma\in(0,1)$.

Next, consider a finite set of agents $\N = \{1,\dots, N\}$. Each agent selects some substate $x_n\in \X_n$ where $x_n$ could model some environmental state and/or personal action. We assume that the sizes $|\X_n|$ are finite and identical across $n$. A state of the MDP is thus the behavior across all the agents $x=\{x_1,\dots, x_N\}$ and the state space is $\X = \bigotimes_{n\in\N}X_n$. In general, a state written with a set subscript such as $x_{\mathcal{B}}$ refers to the actions realized in state $x$ by the agents in $\mathcal{B}$, i.e., $x_{\mathcal{B}} = \{x_b|b\in\mathcal{B}\}$, and the notation $- \tilde{n}$ will refer to the set $\{{n}| {n}\in \N, {n}\neq \tilde{n}\}$.

In this setup, we consider a CP that can broadcast a unique signal $\alpha_n\in\A_n$ to each agent, where $\A_n$ is a finite set of options. The overall control space is thus $\A = \bigotimes_{n\in\N} \A_n$ where one element is $\abold = \{\alpha_1,\dots,\alpha_N\}$. 

Next we consider the \emph{state transition function}, which defines the state-action to state transition densities in the form $p(x'|x,\abold)$. By design, each agent only sees the signal assigned to it from the CP. Furthermore, the agents are connected by a communication structure encoded by a directed graph $(\N, G)$. Then, with $pa(n)$ denoting the parent set of $n$ as given by $G$, we assert the following assumption.

\vspace{1mm}
\begin{assumption} \label{as: agent behavior}
The agents' decision processes are Markovian and time-homogeneous:
\ifarxiv
\begin{equation}
    p(x_n^{t+1}|x^t,\dots, x^0, \alpha_n^t,\dots,\alpha_n^0)=p(x_n^{t+1}|x^t_{pa(n)},\alpha_n^t),\ \forall\  \alpha_n\in \A_{n},\ x_n\in \X_n,\ n\in \N,\ x\in\X,\  t\geq 0.
\end{equation}
\else
\small\begin{align}
    &p(x_n^{t+1}|x^t,\dots, x^0, \alpha_n^t,\dots,\alpha_n^0)=p(x_n^{t+1}|x^t_{pa(n)},\alpha_n^t),\\
    &\forall\  \alpha_n\in \A_{n},\ x_n\in \X_n,\ n\in \N,\ x\in\X,\  t\geq 0.\nonumber
\end{align}
\normalsize
\fi
Furthermore, each agent's decision process is independent of the CP actions assigned to the other agents:

\ifarxiv\else\vspace{-2mm}\fi\ifarxiv
\begin{equation}
    p(x_n'|x_{pa(n)},\abold)=p(x_n'|x_{pa(n)},\alpha_n),\ \forall\ \alpha_n\in \A_{n},\ x_n\in \X_n,\ n\in \N,\ x\in\X,\  t\geq 0. \label{only signal}
\end{equation}
\else
\small\begin{equation}
\begin{gathered}
    p(x_n'|x_{pa(n)},\abold)=p(x_n'|x_{pa(n)},\alpha_n),\\
    \forall\ \alpha_n\in \A_{n},\ x_n\in \X_n,\ n\in \N,\ x\in\X,\  t\geq 0.
\end{gathered}
\end{equation}
\normalsize

\fi
\end{assumption}

Equation \ref{factored structure} describes MASs where each agent makes their decision independently after observing the current state and control. For example, this includes general non-cooperative games \cite{osborne1994course}. Furthermore, the time homogeneity property means that the agents have learned their decision processes \emph{a priori}, such as through MARL, game theory, or another paradigm. The CP is agnostic to the learning processes used by the agents as long as they satisfy the Markov and time homogeneity assumptions. 

As a result of Assumption \ref{as: agent behavior}, the overall state-action to state transition probabilities satisfy the following factored structure:
\begin{align}
    p(x'|x,\abold) = \prod_{n\in\N} p(x_n'|x_{pa(n)},\alpha_n).\label{factored structure}
\end{align}
MDPs whose transitions satisfy \eqref{factored structure} are known as \emph{transition independent MDPs} (TI-MDPs). For ease of notation, the explicit dependence on $pa(n)$ may be dropped, which implicitly defines the most general model of a fully-connected graph.  Next, the following assumption will be made on the reward function of the MDP.

\begin{assumption}\label{as:reward}
The reward function $r(x,\abold)$ satisfies the following \emph{separable} structure:
\begin{align}
    r(x,\abold) \triangleq \sum_{n\in\N}r_n(x_n,\alpha_n),
\end{align}
where each function $r_n$ is non-negative, deterministic, and bounded for all $x\in\X$, $\abold\in\A$.
\end{assumption}

Assumption \ref{as:reward} means that the CP's objective can be encoded per individual agent. For example, this reward can be the proportion of agents in desired goal states. TI-MDPs with separable reward functions defined over the same scope are known as \emph{factored MDPs} \cite{guestrin2003efficient}, \cite{osband2014near}. Through the rest of the paper, it will be assumed that Assumptions \ref{as: agent behavior} and \ref{as:reward} hold.

Finally, the CP may solve the MDP for some policy  $\pi:\X\to\A$. In this work we consider the standard discounted \emph{value function}. The finite horizon value function is,

\ifarxiv\else\vspace{-2mm}\fi\small
\ifarxiv
\begin{equation}
    V^{\pi}_H(x) = \mathbb{E}\left[\sum_{t=0}^{H}\gamma^t r(x_t,\abold_t)\vert x_0=x, \abold_t\sim \pi(x_t), x_{t+1}|x_t,\abold_t\sim T \right].
\end{equation}
\else
\begin{equation}
\begin{split}
    &V^{\pi}_H(x) = \mathbb{E}\Big[\sum_{k=0}^{H}\gamma^k r(x_k,\abold_k)\\
    &\qquad\qquad\qquad\qquad\big\vert x_0=x, \abold_k\sim \pi(x_k), x_{k+1}|x_k,\abold_k\sim T \Big].
\end{split}
\end{equation}
\fi
\normalsize

The infinite horizon value function can similarly be calculated as, $V^{\pi}(x) = \lim_{H\to\infty}V^{\pi}_H(x).$ For brevity, the notation $V^{\pi}\in\R^{|\X|}$ will refer to the vector of values $V^{\pi}(x)\ \forall\ x\in\X$. An optimal policy $\pi^*$ is one that maximizes the value function, $\pi^*(x) \in \argmax_{\pi} V^{\pi}(x)$. The optimal value $V^*(x)$ is known to be unique, and for finite stationary MPDs there exists an optimal stationary deterministic policy \cite{bertsekas1995dynamic}. 

Under a fixed policy, the MDP will evolve as a Markov chain. The \emph{stationary distribution} $\mu$ of a Markov chain under policy $\pi$ is the left eigenvector corresponding to eigenvalue 1 of the transition matrix.

Commonly used in dynamic programming methods to solve an MDP is the Bellman operator. 
\vspace{1mm}
\begin{definition}
The \emph{Bellman operator} applied to the value function $V(x)$ is defined as,
\ifarxiv
\begin{align}
    \Tb^{\pi} V(x) &=  \mathbb{E}_{\abold}\left[ r(x,\abold)+\gamma V(x') \Big\vert \abold\sim\pi,\ x'\vert x,\abold\sim T,\ x_0 = x \right]\label{bellman},\\
    \Tb V(x) &=  \max_{\abold\in\A}\mathbb{E}\left[ r(x,\abold)+\gamma V(x') \Big\vert \ x'\vert x,\abold\sim T,\ x_0 = x \right].
\end{align}
\else
\small
\begin{gather}
    \Tb^{\pi} V(x) =  \mathbb{E}\left[ r(x,\abold)+\gamma V(x') \Big\vert \abold\sim\pi, x'\vert x,\abold\sim T,\ x_0 = x \right]\label{bellman},\\
    \Tb V(x) =  \max_{\abold\in\A}\mathbb{E}\left[ r(x,\abold)+\gamma V(x') \Big\vert \ x'\vert x,\abold\sim T,\ x_0 = x \right].
\end{gather}
\normalsize
\fi
\end{definition}

\section{Problem Statement} \label{sec: problem statement}
Consider a multi-agent system modeled as the MDP $\mm$. While the CP may have found a policy $\pi$ that produces high value, the goodness of this policy is only valid for as long as the model $\mm$ is stationary. The objective in this section is to formalize the model of the post-dropout system and discuss its structural relationship to the pre-dropout system.

\subsection{Probabilistic Agent Dropout}
Let the probability of dropout for agent $n$ be $1-\beta_n$, which is sampled independently of the other agents. Let the vector of probabilities be denoted by $B=[\beta_1,\dots,\beta_N]$. \bluethree{Let the vector $W$ denote a realization of the system via binary flags where $w_n = 1$ means the agent is in the system (i.e., active) and $w_n = 0$ means the agent has left the system (i.e., inactive). Define the shorthand $\N_W = \{n|w_n = 1\}$ to be the set of agents active within the system, and similarly the shorthand $-\N_{W} = \{n|w_n = 0\}$ to be the set of dropped agents. Note that fixing a $W$ is equivalent to sampling a graph with edge weights associated with $n\in -\N_W$ set to zero and $n\in\N_W$ set to one. Finally, let the \emph{expected system} be the graph resulting from taking the expectation of the edge weights across all $W$. The MDP parameters must be adjusted to account for these new graph connections, as shown in Figure \ref{fig: system ex}.}

\begin{figure}
    \centering
    \includegraphics[width=1\linewidth]{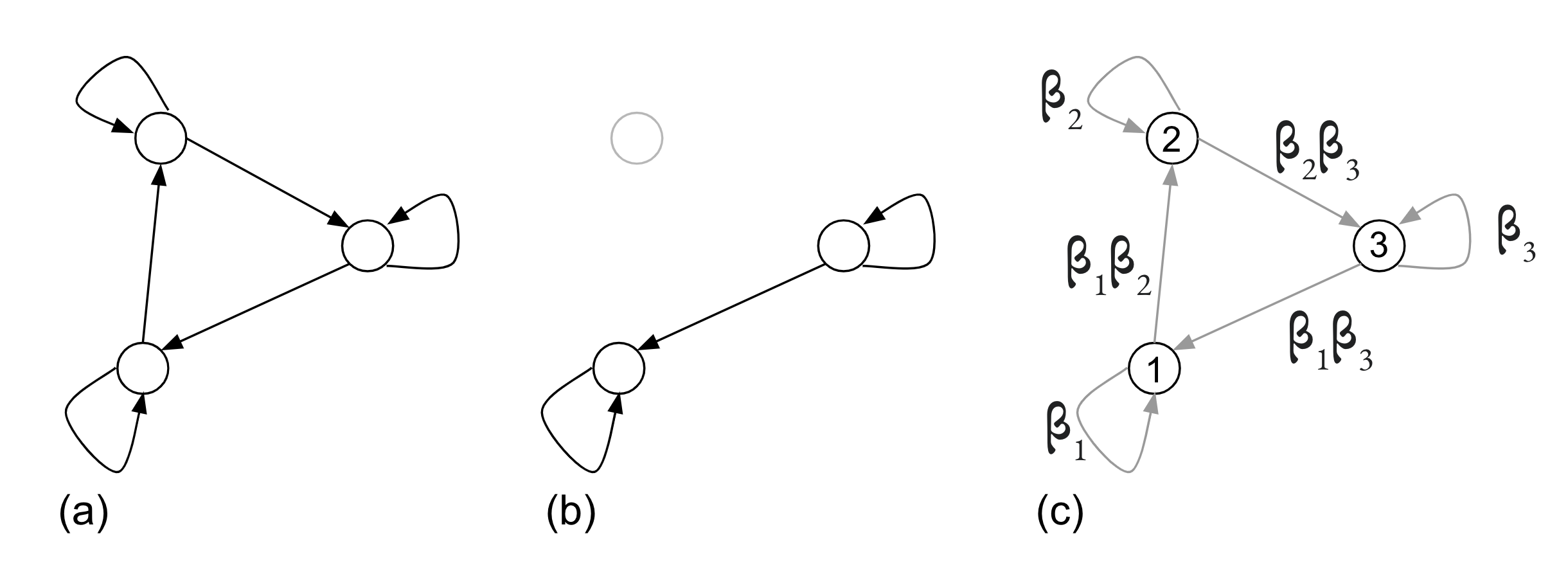}
    \caption{\bluethree{Example system graph. Part (a) shows system under no dropout, i.e., the nominal graph. Part (b) shows a system dropout realization. Part (c) shows the expected system, where the weight of each link is scaled by the corresponding $\beta$.}}
    \label{fig: system ex}
\end{figure}

    \bluetoo{A \emph{realization} of the multi-agent MDP subject to dropout is denoted as} $(\mathcal{M}|W) = (\N, \X, \A, T, r, \gamma| W)$, where:
    \begin{itemize}
        \item $\{\X|W\} \triangleq\bigotimes_{n\in\N_W}X_n$, where one state given the dropout configuration is referred to as $\bar{x}$.
        \item $\{\A|W\} \triangleq \bigotimes_{n\in\N_W}\A_n$, where one action given the dropout configuration is referred to as $\bar{\abold}$.
        \item $T(x,\abold, x'|W): \{\X|W\}\times\{\A|W\}\to \Delta \{\X|W\}$.
        \item $r(x,\abold|W)\triangleq \sum_{n\in\N_W} r_n(x_n,\alpha_n|w_n = 1)$, where the rewards to dropped agents are defined to be zero: $r_n(x_n,\alpha_n|w_n = 0)\triangleq 0$. 
    \end{itemize}
    The policy and value of the realized MDP are:
    \begin{itemize}
        \item $\pibar(\abar|\xbar): \{\X|W\}\to \Delta\{\A|W\}$, where the policy conditioned on the dropout configuration is $\pibar = \prod_{n\in\N_W}P(\abold_n|\xbar)$.
        \item \bluethree{$V^{\bar{\pi}}(\bar{x}|W)$ with equivalent shorthand $V^{\pi}(x|W)$} and corresponding finite time value $V^{\pibar}_H(\xbar|W)$. 
    \end{itemize}
\bluethree{The policy for the expected system is defined as $V_R^{\pi}(x)\triangleq \E_W[V^{\pi}(x|W)]$.}

\bluetoo{The \emph{pre-dropout} system is equal to $(\mm|W = \mathbf{1})$. This is equivalent to the original MDP $\mm$.  }

\bluethree{Given $\mu(x)$, the joint distribution  $\mu(x_{-\N_W})$ of the removed agents is the summation, 
\begin{align}
    \mu(x_{-\N_W}) &=\sum_{n\in \N_W} \mu(x_1,\dots,x_n,\dots,x_N). \label{mu big}
\end{align}
}

\begin{assumption} \label{ergodic ass}
    The pre-dropout system is an ergodic Markov chain under any fixed policy.
\end{assumption}
\begin{assumption}
    \bluethree{The state-action-state probabilities of system $(\mm|W)$ marginalize the inactive agents $\{n|w_n = 0\}$ from the transition probabilities of the nominal model $\mm$:}
    \begin{align}
        P(x_n'|\xbar,\abar_n)&=\E_{x_{-\N_{W}}}\E_{\alpha_{-\N_W}}[P(x_n'|x,\abold)], \label{agent marg}
    \end{align}
    \bluethree{where the marginal probability of $\alpha$ comes from $\pi$ and $x_{-\N_{W}}$ comes from \eqref{mu big}.}
\end{assumption}

The overall transitions may be related as,
\begin{align}
    P(\xbar'|\xbar,\abar) & = P(\xbar'|x_{\N_W},\alpha_{\N_W}),\nonumber\\
    &=\E_{x_{-\N_W},\alpha_{-\N_W}}P(\xbar'|x_{\N_W},x_{-\N_W},\alpha_{\N_W},\alpha_{-\N_W}),\nonumber\\
    &=\E_{x_{-\N_W},\alpha_{-\N_W}}P(\xbar'|x,\abold). \label{marg t}
\end{align}

\bluetoo{The main goals in this paper can thus be summarized as the following objectives:}

\bluethree{\textbf{Problem 1:} Evaluate policy performance for specific system realizations: find $\hat{V}^{\pi}(x|W)\approx V^{\pi}(x|W)$.} \label{prob: 1}

\bluethree{\textbf{Problem 2:} Evaluate policy performance for the expected system: find $\hat{V}_R^{\pi}(x)\approx V_R^{\pi}(x)$.} \label{prob: 2}

\bluethree{The quantity $V_R^{\pi}(x)$ is called the \emph{robust value} as a policy that performs well by this metric provides good value for the expected system. }

If the model $T$ of the original system is known, then the optimal post-dropout policy may \bluetoo{be} easily computed. We thus focus our attention on the case where the model $T$ is unknown and a policy must be learned from experience. All trajectories are sampled from the \emph{original system}, i.e., the multi-agent MDP before any dropout has occurred. This has two critical implications. First, we are unable to sample from any realization of the post-dropout MDP as these systems do not yet exist. Second, we cannot exert a ``bad" policy on the existing system; we always want to control the existing system with a policy that yields an acceptable amount of value. These observations are formalized in the following assumption. 

\begin{assumption}\label{known}
The CP can generate trajectories from the pre-dropout system $\mm$ and cannot sample from any post-dropout realization of the system. The generating transition function $P(x'|x,\abold)$ is unknown. 
\end{assumption}

\subsection{Policy Importance Sampling}
\bluetoo{To safely control the existing system while simultaneously enabling exploration of the policy space, we are interested in a policy IS method. 
Given a dataset of trajectories generated by some policy $\pi$, the goal of policy IS is to estimate the dataset's sample return had the trajectories instead been produced by a different policy $\phi$.
In this way, the system can be controlled with a policy that produces good value while facilitating evaluation of alternate policies. Next, we recap standard policy IS and introduce prior work on dropout evaluation. }

The sample return of a trajectory with horizon length $H$ beginning at $x_0 = x$ is defined as,
\begin{equation}
    G_H(x) \triangleq \sum_{t=0}^{H-1}\gamma^{t}r(x_t,\abold_t).
\end{equation}

Given trajectory $\tau=(x_0,\abold_0,r_0,\dots, x_{H-1},\abold_{H-1}, r_{H-1})$ generated by behavioral policy $\pi$, policy IS estimates the return had $\tau$ instead been generated by target policy $\phi$. For tractability, the following assumption must hold. 

\begin{assumption}\label{support}
$\phi$ is fully supported on $\pi$, i.e. for $\abold$ such that $\phi(\abold|x)>0$ then $\pi(\abold|x)>0$.
\end{assumption}

With $p$ as the joint distribution of $\tau$ under $\phi$ and $q$ the joint distribution of $\tau$ under $\pi$, the estimated \bluethree{return of $\tau$ under $\phi$} is,
\ifarxiv
\begin{equation}
     V_H^{\phi}(x) = \E_{\tau\sim q}\left[\frac{p(\tau)}{q(\tau)}\sum_{t=0}^{H-1} \gamma^{t} r_t(x_t,\abold_t) \right] = \E_{\tau\sim q}\left[\frac{p(\tau)}{q(\tau)}G_H(x) \right].\label{IS original}
\end{equation}
\else
\begin{equation}
     V_H^{\phi}(x)  = \E_{\tau\sim q}\left[\frac{p(\tau)}{q(\tau)}G_H(x) \right].\label{IS original}
\end{equation}
\fi
In comparing two policies on the same MDP, the IS ratio is,
\ifarxiv
\begin{equation*}
    \frac{p(\tau)}{q(\tau)} = \frac{d(x_0)\phi(\abold_1|x_1)P(x_2|x_1,\abold_1)\dots \phi(\abold_H|x_H)}{d(x_0)\pi(\abold_1|x_1)P(x_2|x_1,\abold_1)\dots \pi(\abold_H|x_H)} = \frac{\phi(\abold_1|x_1)\dots\phi(\abold_H|x_H)}{\pi(\abold_1|x_1)\dots\pi(\abold_H|x_H)} = \prod_{t=1}^H\frac{\phi(\abold_t|x_t)}{\pi(\abold_t|x_t)},
\end{equation*}
\else
\small
\begin{align*}
    \frac{p(\tau)}{q(\tau)} &= \frac{d(x_0)\phi(\abold_1|x_1)P(x_2|x_1,\abold_1)\dots \phi(\abold_H|x_H)}{d(x_0)\pi(\abold_1|x_1)P(x_2|x_1,\abold_1)\dots \pi(\abold_H|x_H)} \\
    &= \frac{\phi(\abold_1|x_1)\dots\phi(\abold_H|x_H)}{\pi(\abold_1|x_1)\dots\pi(\abold_H|x_H)} = \prod_{t=1}^H\frac{\phi(\abold_t|x_t)}{\pi(\abold_t|x_t)},
\end{align*}
\normalsize
\fi
which only depends on the chosen policies. The return estimate \eqref{IS original} can thus be evaluated purely from sampled and known information.

In the dropout scenario, however, the objective is to evaluate policies for the \emph{post-dropout} system. Evaluating the probability ratio for a dropout realization $(\mm|W)$ yields,
\begin{equation}
     \frac{p(\tau)}{q(\tau)} = \frac{d(\xbar_0)\phi(\abar_1|\xbar_1)P(\xbar_2|\xbar_1,\abar_1)\dots \phi(\abar_H|\xbar_H)}{d(x_0)\pi(\abold_1|x_1)P(x_2|x_1,\abold_1)\dots \pi(\abold_H|x_H)},\label{ratiobad}
\end{equation}
which will not cancel to a ratio of the policies. This problem arises because the samples were generated from the pre-dropout MDP which has a different model from target post-dropout system.  As \eqref{ratiobad} cannot be evaluated, policy IS for a specific dropout realization may not be used by directly applying the existing method. 

In \cite{fiscko2022confident}, we investigated adapting policy IS to the dropout scenario under the case of deterministic dropout of one agent. It is straightforward to expand this approach to consider multiple dropped agents and perform policy evaluation for one $(\mm|W)$. One strategy to find an acceptable post-dropout policy could be to implement a  policy search routine that leverages this policy IS method for the policy evaluation step. While this proposed method is valid, it scales poorly in terms of the data required to find acceptable policies for all possible combinations of dropped agents. For a set of $N$ agents, there are $2^N$ possible realizations of $W$; therefore to fully solve the post-dropout system, the solution algorithm must be run $2^N$ times. In terms of computation and data complexity, this proposed approach becomes untenable. In the following section, we will demonstrate an alternate method.

\section{Robust MDP} \label{sec:robust}
In this section, we develop an analytical understanding of node dropout in the multi-agent MDP. In particular, we demonstrate that the probabilistic dropout problem can be reduced to a single MDP, dubbed the \emph{robust MDP}, which can be used to find a control policy robust to dropout. 

\blue{The next definition establishes the robust MDP, which considers the expected system over all dropout realizations. The set of agents, states, and actions are identical to the original system, as is the transition function. The key difference, however, comes from the reward function, which is defined now as the expected reward across all dropout realizations. }

\begin{definition}\label{robust MDP}
    Define the \textbf{robust multi-agent MDP} $\mathcal{M}^R = (\N, B, \X, \A, T, r^R, \gamma)$, where the reward is defined as:
    \begin{align}
        r^R(x,\abold) &= \sum_{n\in\N}r_n^R(x_n,\alpha_n)\\
        r_n^R(x_n,\alpha_n) &\triangleq \E_{w_n}[r_n(x_n,\alpha_n| w_n)]=\beta_n r_n(x_n,\alpha_n|w_n=1)
    \end{align}
\end{definition}
\blue{As a result of the new reward function, we must specify the value function. Let $J^{\pi}(x) = \lim_{H\to\infty}\E_{\pi}\Big[\sum_{t=0}^{H-1}\gamma^tr^R(x_t,\abold_t)\Big]$ be the infinite-horizon value function of this system $\mathcal{M}^R$. The function $J$ satisfies,}
\begin{align}
    &J^{\pi}(x) = \E_{\abold}\left[\E_W[r(x,\abold|W)]+\gamma \E_{x'}[J^{\pi}(x')|x,\abold]\right].\label{J def}
\end{align}
As $B$ is known \emph{a priori}, the expected rewards over dropout realizations can easily be evaluated. This fact, combined with the assumption that trajectories from $T$ can be generated, means that $J$ can be estimated in a model-free setting.

\bluethree{An important distinction at this point is that the quantity $J^{\pi}(x)$ is not necessarily equal to the desired robust value $V_R^{\pi}(x)$. The next section will establish the mathematical relationship between these functions.}

\subsection{Value Theorems}
In this section, an analysis of the robust MDP will be performed to provide structure for the probabilistic dropout problem. \bluethree{The first theorem expresses the value of a post-dropout value function with respect to a given pre-dropout policy. At this stage, we seek to reconcile the difference in dimension between the pre-dropout and post-dropout systems, and evaluating different policies will be achieved later by IS.}

\begin{theorem} \label{marginalization theorem}
    \textbf{(Value of a Realized Post-Dropout System)} Consider a realization of the system $W$ and some policy $\pibar$. Define the policy,
    \begin{align}
        \pi(\abold|x) = \pibar(\abar|\xbar)\prod_{n\in-\N_W}\bluethree{\pi_n(\abold_n|x)}.\label{agument}
    \end{align}

    \noindent Then the finite horizon value of system $(\mm|W)$ is equal to, 
    \ifarxiv
    \begin{equation}
        \Vbar_H^{\pibar}(\xbar)= \E_{x_{-\N_W}}\E\left[\sum_{t=0}^{H-1}\gamma^tr(x_t,\abold_t|W)\big|x_0 = x, \abold_t|x_t\sim \pi,  x_{t+1}|x_t, \abold_t\sim T\right], \label{conversion theorem eq} 
    \end{equation}
    \else
    \small
    \begin{equation}
    \begin{split}
        \Vbar_H^{\pibar}(\xbar)= \E_{x_{-\N_W}}\E\Bigg[\sum_{t=0}^{H-1}&\gamma^tr(x_t,\abold_t|W)\big|x_0 = x, \\
        &\abold_t|x_t\sim \pi,  x_{t+1}|x_t, \abold_t\sim T\Bigg], \label{conversion theorem eq} 
    \end{split}
    \end{equation}
    \normalsize
    \fi
    and the infinite horizon value is equal to,
    \ifarxiv
    \begin{equation}
        \Vbar^{\pibar}(\xbar)=\lim_{H\to\infty}\E_{x_{-\N_W}}\E\left[\sum_{t=0}^{H-1}\gamma^tr(x_t,\abold_t|W)\big|x_0 = x, \abold_t|x_t\sim \pi,  x_{t+1}|x_t, \abold_t\sim T\right], 
    \end{equation}
    \else
    \small
    \begin{equation}
    \begin{split}
        \Vbar^{\pibar}(\xbar)=\lim_{H\to\infty}\E_{x_{-\N_W}}\E\Bigg[\sum_{t=0}^{H-1}&\gamma^tr(x_t,\abold_t|W)\big|x_0 = x, \\
        &\abold_t|x_t\sim \pi,  x_{t+1}|x_t, \abold_t\sim T\Bigg], 
    \end{split}
    \end{equation}
    \normalsize
    \fi
    where the expectation is with respect to the marginal distribution of the pre-dropout system under $\pi$.
\end{theorem}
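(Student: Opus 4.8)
The plan is to establish the finite-horizon identity \eqref{conversion theorem eq} by induction on $H$ and then pass to the limit for the infinite-horizon case. The first step is to use Assumption~\ref{as:reward} together with the convention $r_n(x_n,\alpha_n|w_n=0)\triangleq 0$ to note that the running reward satisfies $r(x_t,\abold_t|W)=\sum_{n\in\N_W}r_n(x_{n,t},\alpha_{n,t})$ and therefore depends only on the active coordinates $(x_{t,\N_W},\alpha_{t,\N_W})$. Consequently the discounted sum on the right-hand side is determined entirely by the marginal law of the active coordinates along the pre-dropout trajectory generated by the augmented policy $\pi$ of \eqref{agument}.

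The heart of the argument is a one-step marginalization showing that, under $\pi$, the active coordinates evolve marginally according to the post-dropout kernel. Using the factored structure \eqref{factored structure}, the joint transition factors as $\prod_{n\in\N}p(x_n'|x_{pa(n)},\alpha_n)$; keeping the factors for $n\in\N_W$ and integrating out the dropped substates $x_{-\N_W}$ and the dropped signals $\alpha_{-\N_W}$ --- the latter drawn from $\pi_U$ by the construction in \eqref{agument} --- reproduces exactly the right-hand side of \eqref{agent marg}, namely $P(x_n'|\xbar,\abar_n)$ for each active $n$, and hence the full active kernel $P(\xbar'|\xbar,\abar)$ via \eqref{marg t}. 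Thus one step of the full chain, once the dropped coordinates are averaged against their stationary marginal $\mu(x_{-\N_W})$ from \eqref{mu big} and the dropped signals against $\pi_U$, coincides with one step of $(\mm|W)$ under $\pibar$.

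With the base case $H=0$ trivial (the empty sum equals $\Vbar_0^{\pibar}=0$), I would split off the $t=0$ reward, whose expectation is $\E_{\abar\sim\pibar}[r(\xbar,\abar|W)]$ and is independent of $x_{-\N_W}$, and then invoke the Markov property and the inductive hypothesis on the discounted tail starting from $x_1$. The main obstacle is \emph{consistency across time}: since a dropped agent may have active parents, the dropped coordinates are statistically coupled to the active trajectory, so I must verify that whenever a dropped substate enters as a parent of an active agent it is still distributed according to the same marginal $\mu(x_{-\N_W})$ used to define the post-dropout kernel in \eqref{agent marg}. This is precisely where Assumption~\ref{ergodic ass} is needed: initializing the dropped coordinates in the pre-dropout stationary marginal and propagating them under $\pi_U$ must leave the relevant dropped marginal invariant, so that the one-step identity above applies unchanged at every $t$. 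Establishing this invariance, rather than the algebra of a single marginalization, is the delicate part of the proof.

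Finally, for the infinite-horizon statement I would let $H\to\infty$ on both sides. Because each $r_n$ is non-negative and bounded by Assumption~\ref{as:reward} and $\gamma\in(0,1)$, the finite-horizon values are monotone in $H$ and uniformly bounded by a convergent geometric series, so the limit exists and commutes with the finite expectations by monotone convergence; the left-hand side converges to $\Vbar^{\pibar}(\xbar)$ by definition, which yields the claimed identity.
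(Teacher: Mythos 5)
Your outline follows the same broad route as the paper --- reduce the claim to a marginalization identity, induct over the horizon, and dispose of the infinite-horizon case with a geometric tail bound --- but it stops short at exactly the step you yourself flag as delicate, and that step is the entire content of the theorem. You assert that the dropped coordinates, initialized in the pre-dropout stationary marginal and propagated under $\pi_U$, remain distributed according to $\mu(x_{-\N_W})$ whenever they enter as parents of active agents, and you offer Assumption~\ref{ergodic ass} as the reason. No argument is given, and the claim as phrased does not hold conditionally: because dropped agents may have active parents and vice versa, the law of $x_{-\N_W,t}$ \emph{given the realized active trajectory up to time $t$} is in general not the unconditional marginal, and ergodicity alone does not restore this. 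A forward induction along a sampled trajectory therefore needs a decoupling argument that your proposal does not supply.

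The paper sidesteps this by structuring the induction differently. It introduces a fictitious MDP on the full pre-dropout state and action spaces with reward $r(x,\abold|W)$, denotes its value by $U$, and proves $\Vbar_t^{\pibar}(\xbar)=\E_{x_{-\N_W}}U_t^{\pi}(x)$ by \emph{backwards} induction on the Bellman recursion. Because the inductive hypothesis at level $t+1$ already carries its own marginalization $\E_{x'_{-\N_W}}$, the expectation over the dropped coordinates and over $\alpha_{-\N_W}\sim\pi_U$ is re-applied at every level of the recursion via iterated expectation and the kernel relation \eqref{marg t}; one never has to argue that the conditional law of the dropped coordinates is invariant along a trajectory. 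To repair your plan, either recast the induction so that the marginalization is part of the quantity being inducted on (as the paper does), or supply a proof of the invariance claim, which I do not believe holds in the generality you need. Your treatment of the reward (zero for dropped agents, independent of $x_{-\N_W}$ for active ones) and the passage to $H\to\infty$ by bounding the tail of the discounted sum are both fine and match the paper.
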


As a consequence of Theorem \ref{marginalization theorem} and the policy design in \eqref{agument}, it holds that $\Vbar^{\pibar}(\xbar) = V^{\pi}(x|W)$.

\begin{proof}
    In this proof we construct a fictitious MDP with state space $\X$, action space $\A$, transition $T$, and reward function $r(x,\abold|W)$. This system is identically distributed to the pre-dropout system. We will use the notation $U$ to denote the value function associated with this system. To prove the theorem, the goal will be to show that the expression $\Vbar^{\pibar}(\xbar) = \E_{x_{-\N_W}}U_t^{\pi}(x)$ holds for all $t$. Initial case:
    \begin{align}
        &\Vbar_{H}^{\pibar}(\xbar)=\E_{\abar}[r(\xbar,\abar)]= \E_{\abar}\left[\sum_{n\in\N_W}r_n(x_{n},\alpha_n) \right]. \label{eq: a init}
    \end{align}
    Note that for $n\in\N_W$, $r_n(x_{n},\alpha_{n}|w_n = 1)$ is independent of $x_{-\N_W}$ and $\abold_{-\N_W}$ because agents in $-\N_W$ have been removed from the system. Furthermore, $r_n(x_{n},\alpha_{n}|w_n = 0) = 0$. Let $\pi_U(\abold_{-\N_W}|x)=\prod_{n\in-\N_W}\pi(\abold_n|x)$. Then for $x_{-\N_W}$ and $\abold_{-\N_W}$ sampled from the fictitious system, \eqref{eq: a init} is:
    
\ifarxiv

\begin{align*}
        &\sum_{x_{-\N_W}}\mu(x_{-\N_W})\sum_{\abold_{-\N_W}}\pi_U(\abold_{-\N_W}|x)\sum_{\abar}\pibar(\abar|\xbar)\bigg[\sum_{n\in\N_W}r_n(x_{n},\alpha_n) \bigg],\\
    &= \sum_{x_{-\N_W}}\mu(x_{-\N_W})\sum_{\abold}\pi(\abold|x)\left[\sum_{n\in\N_W}r_n(x_{n},\alpha_n) \right],\\
    &=\E_{x_{-\N_W}}\E_{\abold}\bigg[\sum_{n\in\N_W}r_n(x_{n},\alpha_{n}|w_n = 1)+\sum_{n\notin\N_W}r_n(x_{n},\alpha_{n}|w_n = 0)\bigg],\\
    &=\E_{x_{-\N_W}}\E_{\abold} [U_H^{\pi}(x)].
\end{align*}
\else
\small\begin{align*}
    \begin{split}
        &\sum_{x_{-\N_W}}\mu(x_{-\N_W})\sum_{\abold_{-\N_W}}\pi_U(\abold_{-\N_W}|x)\times\\
        &\quad\sum_{\abar}\pibar(\abar|\xbar)\bigg[\sum_{n\in\N_W}r_n(x_{n},\alpha_n) \bigg],
    \end{split}\\
    &= \sum_{x_{-\N_W}}\mu(x_{-\N_W})\sum_{\abold}\pi(\abold|x)\left[\sum_{n\in\N_W}r_n(x_{n},\alpha_n) \right],\\
    \begin{split}
        &=\E_{x_{-\N_W}}\E_{\abold}\bigg[\sum_{n\in\N_W}r_n(x_{n},\alpha_{n}|w_n = 1) \\
        &\quad+\sum_{n\notin\N_W}r_n(x_{n},\alpha_{n}|w_n = 0)\bigg],
    \end{split}\\
    &=\E_{x_{-\N_W}}\E_{\abold} [U_H^{\pi}(x)].
\end{align*}
\normalsize
\fi

\bluetoo{Now} use induction to show for general $t$.  Note that as $x_{-\N_W}$ and $\abold_{-\N_W}$ are sampled from the fictitious system, the transition matrices are related according to \eqref{marg t}. Assume the claim holds for $t+1$: $\Vbar^{\pibar}_{t+1}(\xbar') = \E_{x_{-\N_W}'}U_{t+1}^{\pi}(x')$. Then stepping backwards,
\ifarxiv
\begin{align}
    \Vbar_{t}^{\pibar}(\xbar)&=\E_{\abold_{\N_W}}\left[r(x,\abold|W)+\gamma\E_{\xbar'} [\Vbar^{\pibar}_{t+1}(\xbar')|\xbar,\abar]\right]\nonumber \\
    &=\E_{\abold_{\N_W}}[\E_{x_{-\N_W},\alpha_{-\N_W}}r(x,\abold|W)+\gamma\E_{x_{-\N_W},\alpha_{-\N_W}}\E_{\xbar'} [\E_{x_{-\N_W}'}U_{t+1}^{\pi}(x')|x,\abold]]\label{iterated exp}\\
    &=\E_{x_{-\N_W}}\E_{\abold}[r(x,\abold|W)+\gamma \E_{x'} [U_{t+1}^{\pi}(x')|x,\abold]]\label{iterated exp next}\\
    &=\E_{x_{-\N_W}}[U_{t}^{\pi}(x)]\nonumber.
\end{align}
\else
\begin{align}
    \Vbar_{t}^{\pibar}(\xbar)&=\E_{\abold_{\N_W}}\left[r(x,\abold|W)+\gamma\E_{\xbar'} [\Vbar^{\pibar}_{t+1}(\xbar')|\xbar,\abar]\right],\nonumber \\
    \begin{split}
        &=\E_{\abold_{\N_W}}[\E_{x_{-\N_W},\alpha_{-\N_W}}r(x,\abold|W)\\
        &\quad+\gamma\E_{x_{-\N_W},\alpha_{-\N_W}}\E_{\xbar'} [\E_{x_{-\N_W}'}U_{t+1}^{\pi}(x')|x,\abold]],\label{iterated exp}
    \end{split}\\
    &=\E_{x_{-\N_W}}\E_{\abold}[r(x,\abold|W)+\gamma \E_{x'} [U_{t+1}^{\pi}(x')|x,\abold]]\label{iterated exp next},\\
    &=\E_{x_{-\N_W}}[U_{t}^{\pi}(x)]\nonumber.
\end{align}
\fi

As the pre-dropout system is assumed to be an ergodic chain for all policies in Assumption \ref{ergodic ass}, the marginalization will be well defined. To verify the infinite horizon case, note that all rewards are non-negative and 
\ifarxiv
\begin{equation}
    \Vbar^{\pibar}(\xbar)=\E_{x_{-\N_W}}\E\left[\sum_{t=0}^{H-1}\gamma^tr(x_t,\abold_t|W)\right]+\E_{x_{-\N_W}}\E\left[\sum_{t=H}^{\infty}\gamma^tr(x_t,\abold_t|W)\right].
\end{equation}
\else
$\Vbar^{\pibar}(\xbar)=\E_{x_{-\N_W}}\E\left[\sum_{t=0}^{H-1}\gamma^tr(x_t,\abold_t|W)\right]+\E_{x_{-\N_W}}\E\left[\sum_{t=H}^{\infty}\gamma^tr(x_t,\abold_t|W)\right]$.
\fi
The right most term is lower bounded by 0 as $r$ is  non-negative. In addition, it can be expressed as
\ifarxiv
\begin{align*}
    &\E_{x_{-\N_W}}\E\left[\sum_{t=H}^{\infty}\gamma^tr(x_t,\abold_t|W)\right]\leq \E_{x_{-\N_W}}\rmax\frac{\gamma^H}{1-\gamma} = \rmax\frac{\gamma^H}{1-\gamma}.
\end{align*}
\else
$\E_{x_{-\N_W}}\E\left[\sum_{t=H}^{\infty}\gamma^tr(x_t,\abold_t|W)\right]\leq \E_{x_{-\N_W}}\rmax\frac{\gamma^H}{1-\gamma}= \rmax\frac{\gamma^H}{1-\gamma}.$
\fi
Take the limit as $H\to\infty$ to show the upper bound goes to 0 to complete the proof.
\end{proof}

Theorem \ref{marginalization theorem} establishes that samples generated by the pre-dropout MDP can be used to evaluate policies for the post-dropout system based on a marginalization relationship. This resolves the issue of being unable to sample from the desired post-dropout realization. Building on this observation, the next result states that the value function of the robust MDP is equal to the value of the expected realization of the system.

\begin{theorem} \label{dropout value function theorem}
    \textbf{(Value of the Expected System)} The value of the expected system is, 
    \begin{align}
        V^{\pi}_R(x) &= \E_W\E_{x_{-\N_W}} J^{\pi}(x),\label{value dropout}%
    \end{align}
    and a robust policy can be computed as,
    \begin{align}
        \pi_R(\abold|x)  =  \E_W\E_{x_{-\N_W}}\pi(\abold|x). \label{policy dropout}
    \end{align}
\end{theorem}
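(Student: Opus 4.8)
The plan is to reduce the value identity to Theorem~\ref{marginalization theorem} and then prove it by a finite-horizon induction, closing with a geometric tail bound exactly as in that theorem's proof. First I would take $V^{\pi}_R(x)=\E_W[V^{\pi}(x|W)]$ as the definition of the value of the expected system (the secondary objective stated in Section~\ref{sec: problem statement}). For each fixed realization $W$, Theorem~\ref{marginalization theorem} together with the identity $\Vbar^{\pibar}(\xbar)=V^{\pi}(x|W)$ noted immediately after it lets me write $V^{\pi}(x|W)=\E_{x_{-\N_W}}U_W^{\pi}(x)$, where $U_W$ is the value of the fictitious full-state MDP with transition $T$ and reward $r(x,\abold|W)$. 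Hence $V^{\pi}_R(x)=\E_W\E_{x_{-\N_W}}U_W^{\pi}(x)$, and the task becomes showing that $U_W$ may be replaced by $J$, i.e. that $\E_W\E_{x_{-\N_W}}U_W^{\pi}(x)=\E_W\E_{x_{-\N_W}}J^{\pi}(x)$.

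I would establish this by induction on the horizon $t$, proving $V^{\pi}_{R,t}(x)=\E_W\E_{x_{-\N_W}}J_t^{\pi}(x)$. The base case reduces to the reward identity $\E_W[r(x,\abold|W)]=r^R(x,\abold)$, which holds by the definition of $\mathcal{M}^R$. For the inductive step I would expand $V^{\pi}_{R,t}$ through the one-step Bellman recursion, substitute the hypothesis for the continuation value at $t+1$, and use the marginalization relationship \eqref{marg t} (with ergodicity, Assumption~\ref{ergodic ass}, ensuring the marginals are well defined) to lift the realized transition back to the full transition $T$, mirroring the passage from \eqref{iterated exp} to \eqref{iterated exp next}. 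The resulting expression is recognized as the Bellman operator \eqref{J def} defining $J$, which closes the step. The infinite-horizon identity then follows by splitting the return at $H$ and bounding the tail by $\rmax\gamma^H/(1-\gamma)\to 0$, using that rewards are non-negative and bounded, exactly as in Theorem~\ref{marginalization theorem}.

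The main obstacle is the interchange of the expectation $\E_W$ over reward realizations with the marginalization $\E_{x_{-\N_W}}$, whose very index set $-\N_W$ depends on $W$. The crucial fact that unlocks this is that $r(x,\abold|W)$ depends only on the surviving coordinates $x_{\N_W},\abold_{\N_W}$ and is independent of $x_{-\N_W},\abold_{-\N_W}$, since dropped agents contribute zero reward. This independence means that inside $\E_W\E_{x_{-\N_W}}\E_{\abold}$ I may freely insert or delete the marginalization around the reward term, which lets me swap $r(x,\abold|W)$ for $\E_W[r(x,\abold|W)]=r^R(x,\abold)$ without changing the value; this step converts the $W$-dependent reward into the $W$-free robust reward and makes the two nested expectations coincide.

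Finally, for the policy claim I would argue that the value identity preserves the optimization structure. Since $\pi_J$ is an optimal policy of the standard MDP $\mathcal{M}^R$ and, by the factored structure \eqref{local values}, its greedy choices decompose per agent, the optimal policy for the expected system is obtained by applying the same marginalization that relates the augmented policy \eqref{agument} to its realized counterpart. Averaging this operator over $W$ and over the dropped coordinates yields $\pi_R(\abold|x)=\E_W\E_{x_{-\N_W}}\pi_J(\abold|x)$, so that maximizing $J^{\pi}$ on the robust MDP is equivalent, through the marginalization map, to maximizing $V^{\pi}_R$.
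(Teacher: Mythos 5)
Your proposal is correct and follows essentially the same route as the paper: a finite-horizon induction through the Bellman recursion, using the marginalization relationship \eqref{marg t} to lift realized transitions back to $T$, the independence of $r(x,\abold|W)$ from the dropped coordinates to justify swapping in $\E_W[r(x,\abold|W)]=r^R(x,\abold)$ (the paper's step \eqref{lots of e}), and a geometric tail bound for the infinite-horizon limit. Your explicit detour through the fictitious systems $U_W$ of Theorem \ref{marginalization theorem} is a slightly cleaner packaging of the same mechanics, and your treatment of the policy claim is no less rigorous than the paper's own ``then follows.''
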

\begin{proof}
    Show that the value function of the expected system is equivalent to the robustness criterion. 
    
      Initial case. Note that again $r_n=0$ for removed agents in $-\N_W$. Thus the expectation can be taken, $V_{RH}^{\pi}(x)=\E_{\abold}\E_W[r(x,\abold|W)]= \E_W\E_{x_{-\N_W}}\E_{\abold}\E_W[r(x,\abold|W)]$.

    Next use induction to show for general $t$. Assume the claim holds for $t+1$: $V^{\pi}_{R,t+1}(x') = \E_{x'_{-\N_W}}J^{\pi}_{t+1}(x')$. Then stepping backwards,%
    \begin{align}
    &V_{R,t}^{\pi}(x) = \E_W [V^{\pi}_t(x|W)]\nonumber\\
    &= \E_W\E_{\abar}[r(x,\abold|W)+\gamma \E_{\xbar'}[V^{\pi}_t(x')|\xbar,\abar]]\nonumber\\
    &= \E_W\E_{\abar}[r(x,\abold|W)+\gamma\E_{\xbar'}[\E_{x'_{-\N_W}}J^{\pi}_{t+1}(x')|\xbar,\abar]]\label{iterated exp}\\
    &= \E_W\E_{\abar}[ r(x,\abold|W)+\gamma \E_{x'}[J^{\pi}_{t+1}(x')|\xbar,\abar]]\label{iterated exp next}\\
    \begin{split}
        &= \E_W\E_{\abar}[\E_{x_{-\N_W},\abold_{-\N_W}}r(x,\abold|W)\\
        &\quad+\gamma \E_{x_{-\N_W},\abold_{-\N_W}}\E_{x'}[J^{\pi}_{t+1}(x')|x,\abold]]\nonumber
    \end{split}\\
    &=\E_W\E_{x_{-\N_W}}\E_{\abold}[r(x,\abold|W)+\gamma \E_{x'}[J^{\pi}_{t+1}(x')|x,\abold]]\label{vrt last}
    \end{align}
    Then note that as no reward is given to removed agents:
    \begin{align}
        &\E_W\E_{x_{-\N_W}}\E_{\abold}[r(x,\abold|W)]\nonumber\\
        &=\E_W\E_{\abold}[r(x,\abold|W)]\nonumber\\
        &=\sum_{n\in\N}\E_{w_n}\E_{\abold_n}[r(x_n,\abold_n|w_n)]\nonumber\\
        &=\sum_{n\in\N}\E_{\abold_n}\E_{w_n}[r(x_n,\abold_n|w_n)]\nonumber\\
        &=\E_{\abold}[\E_Wr(x,\abold|W)]\nonumber\\
        &=\E_W\E_{x_{-\N_W}}\E_{\abold}[\E_Wr(x,\abold|W)]\label{lots of e}
    \end{align}
    Then continuing from \eqref{vrt last},
    \begin{align}
        &=\E_W\E_{x_{-\N_W}}\E_{\abold}[r(x,\abold|W)+\gamma \E_{x'}[J^{\pi}_{t+1}(x')|x,\abold]]\nonumber\\
        &=\E_W\E_{x_{-\N_W}}\E_{\abold}[\E_Wr(x,\abold|W) +\gamma \E_{x'}[J^{\pi}_{t+1}(x')|x,\abold]]\nonumber\\
        &=\E_W\E_{x_{-\N_W}} J^{\pi}_t(x)\label{exp j 2}
    \end{align}

    To verify the infinite horizon case, note that all rewards are non-negative and, 
    \ifarxiv
    \begin{equation*}
        V^{\pi}_R(x)=\E_W\E_{x_{-\N_W}}\E\left[\sum_{t=0}^{H-1}\gamma^t\E_Wr(x_t,\abold_t|W)\right]+\E_W\E_{x_{-\N_W}}\E\left[\sum_{t=H}^{\infty}\gamma^t\E_Wr(x_t,\abold_t|W)\right].
    \end{equation*}
    \else
    \begin{equation*}
    \begin{split}
        V^{\pi}_R(x)&=\E_W\E_{x_{-\N_W}}\E\left[\sum_{t=0}^{H-1}\gamma^t\E_Wr(x_t,\abold_t|W)\right]\\
        &\quad+\E_W\E_{x_{-\N_W}}\E\left[\sum_{t=H}^{\infty}\gamma^t\E_Wr(x_t,\abold_t|W)\right].
        \end{split}
    \end{equation*}
    \fi
    The \bluetoo{second} term is lower bounded by 0 as $r$ is bounded to be non-negative. In addition, it can be expressed as,
    \ifarxiv
    \begin{align*}
        &\E_W\E_{x_{-\N_W}}\E\left[\sum_{t=H}^{\infty}\gamma^t\E_Wr(x_t,\abold_t|W)\right]\leq \E_{x_{-\N_W}}\rmax\frac{\gamma^H}{1-\gamma} = \rmax\frac{\gamma^H}{1-\gamma}.
    \end{align*}
    \else
    \begin{align*}
        &\E_W\E_{x_{-\N_W}}\E\left[\sum_{t=H}^{\infty}\gamma^t\E_Wr(x_t,\abold_t|W)\right]&\leq \E_{x_{-\N_W}}\rmax\frac{\gamma^H}{1-\gamma}.
    \end{align*}
    \fi
    Take the limit as $H\to\infty$ to show the upper bound goes to 0 to complete the proof. Equation \eqref{policy dropout} then follows.
    \end{proof}

\begin{remark}
    Retaining the transition function as the original $T$ is a key necessary condition for policy IS. Replacing the reward with $\E_W r(x,\abold|W)$ will then enable policy IS to be evaluated on a single system, rather than needing estimate a value for each of the $2^N$ combinations of $W$.
\end{remark}

Given Theorem \ref{dropout value function theorem}, it can be established that the optimal robust value and policy can similarly be found as \bluetoo{an expectation}.

\begin{theorem}
    \bluetoo{\textbf{(Optimal Value and Policy of the Expected System)}} The optimal robust value satisfies, 
    \begin{equation}
        V^{*}_R(x) = \bluethree{\max_{\pi}V_R^{\pi}(x)} = \E_{W\sim B}[V^{*}(x|W)],\label{opt robust value}
    \end{equation}
    and an optimal robust policy \bluethree{$\pi^*_R(\abold|x)\in\argmax_{\pi}V_R^{\pi}(x)$} satisfies,
    \begin{equation}
        \pi^{*}_R(\abold|x) = \E_{W\sim B}[\pi^{*}(\abold|x,W)].\label{opt robust policy}
    \end{equation}
\end{theorem}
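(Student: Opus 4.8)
The plan is to derive both optimal identities from Theorem~\ref{dropout value function theorem} by layering a standard pointwise-optimality argument onto the value identity, with the factored structure doing the essential work of letting the maximization over policies commute with the expectation over dropout realizations. Recall that Theorem~\ref{dropout value function theorem} equates the robust value $V_R^{\pi}(x)=\E_W[V^{\pi}(x|W)]$ with $\E_W\E_{x_{-\N_W}}J^{\pi}(x)$ for every $\pi$, and that $V^{\pi}(x|W)=\Vbar^{\pibar}(\xbar)$ by Theorem~\ref{marginalization theorem}.

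First I would record the easy inequality. For an arbitrary $\pi$ and each fixed realization $W$, the term $V^{\pi}(x|W)=\Vbar^{\pibar}(\xbar)$ is bounded above by the per-realization optimum $V^{*}(x|W)=\max_{\pibar}\Vbar^{\pibar}(\xbar)$, so $V_R^{\pi}(x)\le\E_W[V^{*}(x|W)]$; maximizing over $\pi$ gives $V_R^{*}(x)\le\E_W[V^{*}(x|W)]$. This ``max of an average is at most the average of the maxima'' direction needs no structure. Next I would produce the matching lower bound by exhibiting a single policy that is simultaneously optimal in every realization. Let $\pi_J^{*}$ be an optimal policy of the robust MDP $\mathcal{M}^R$; it maximizes $J^{\pi}(x)$ at every state, and since Theorem~\ref{dropout value function theorem} writes $V_R^{\pi}(x)=\E_W\E_{x_{-\N_W}}J^{\pi}(x)$ with $\E_W\E_{x_{-\N_W}}[\cdot]$ a monotone positive linear operator, the same $\pi_J^{*}$ maximizes $V_R^{\pi}$, giving $V_R^{*}(x)=\E_W\E_{x_{-\N_W}}J^{*}(x)$. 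It then remains to show that the restriction of $\pi_J^{*}$ to each realization is optimal there, i.e. $V^{\pi_J^{*}}(x|W)=V^{*}(x|W)$ for all $W$, which upgrades the previous line to \eqref{opt robust value}.

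The structural heart is the per-agent argument. By Definition~\ref{robust MDP} the robust MDP is itself factored: its reward separates as $r_n^{R}=\beta_n r_n$ and its transition factors are the original $T_n$. Hence the factored Bellman recursion \eqref{local values} applies and $\pi_J^{*}$ is greedy agent-by-agent; because $\beta_n>0$ the scaling leaves each agent's $\argmax$ unchanged, so agent $n$'s robust-optimal action is exactly its optimal action in its own sub-MDP. Transition independence (Assumption~\ref{as: agent behavior}) together with the marginalization \eqref{agent marg} ensures that a surviving agent's greedy action is unaffected by which other agents are removed, so the same per-agent actions stay optimal in each $(\mathcal{M}|W)$ — this is precisely what lets $\max_{\pi}$ and $\E_W$ be interchanged. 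Finally, \eqref{opt robust policy} follows by substituting $\pi_J^{*}$ into the lift \eqref{policy dropout}, $\pi_R^{*}=\E_W\E_{x_{-\N_W}}\pi_J^{*}$, and identifying the marginalized restriction $\E_{x_{-\N_W}}\pi_J^{*}$ with the realization-optimal policy $\pi^{*}(\abold|x,W)$; the infinite-horizon passage is handled exactly as in Theorems~\ref{marginalization theorem}--\ref{dropout value function theorem}, via non-negativity of the rewards and the geometric tail bound $\rmax\gamma^{H}/(1-\gamma)\to 0$.

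I expect the interchange of maximization and expectation to be the main obstacle: for a generic family of MDPs one only has $\max_{\pi}\E_W[\cdot]\le\E_W[\max_{\pi}\,\cdot]$, and equality can fail. The argument must lean on reward separability (Assumption~\ref{as:reward}) and transition independence to show that the robust-optimal, per-realization-optimal, and per-agent-greedy policies all coincide on the surviving agents; the delicate point to verify is that marginalizing the dropped agents out of the transition factors in \eqref{agent marg} does not alter the surviving agents' greedy choices, so that a single policy is optimal across all $2^{N}$ realizations at once.
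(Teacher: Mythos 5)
Your write-up is far more detailed than the paper's own proof, which consists of the single line ``Follows from Theorem~\ref{dropout value function theorem}'' (with a commented-out sketch that simply asserts $\max_{\abold}\E_{W}[V^{\pi}(x|W)]=\E_{W}\max_{\abold_{\N_W}}V^{\pi}(x|W)$). You correctly isolate the real content of the statement — the interchange of $\max_{\pi}$ with $\E_{W}$, which requires a single policy that is simultaneously optimal in every realization — and your upper-bound direction and the reduction of \eqref{opt robust policy} to \eqref{policy dropout} are fine. So in identifying what actually needs to be proved, you go beyond the paper.

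However, the step you lean on to close the gap does not hold up. You claim that ``marginalizing the dropped agents out of the transition factors in \eqref{agent marg} does not alter the surviving agents' greedy choices.'' The local Bellman operator $\Tb_n$ in \eqref{local values} maximizes $\E[r_n(x_n,\alpha_n)+\gamma V_k(x_n')]$ where $x_n'\sim P(x_n'|x_{pa(n)},\alpha_n)$; when $pa(n)$ contains dropped agents, the post-dropout kernel is the \emph{average} $\E_{x_{-\N_W}}[P(x_n'|x,\abold)]$, and the maximizer of an averaged objective need not coincide with the maximizer at any particular $x_{-\N_W}$ — this is exactly the non-commutation of maximization and marginalization that the paper itself exhibits as a counterexample in Section~\ref{sec:subopt pre}. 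So a surviving agent's greedy action generally \emph{does} depend on which of its parents are removed, and no single per-agent greedy policy is optimal across all $2^{N}$ realizations unless the dependency graph is trivial. A secondary issue: the operator $\E_W\E_{x_{-\N_W}}$ in Theorem~\ref{dropout value function theorem} takes the expectation over $x_{-\N_W}$ with respect to the marginal distribution of the pre-dropout chain \emph{under $\pi$}, so it is not a fixed positive linear functional of $J^{\pi}$; pointwise optimality of $\pi_J^{*}$ for $J$ therefore does not immediately transfer to optimality of $V_R^{\pi}$, since the weights $\mu^{\pi}(x_{-\N_W})$ move with $\pi$. Both issues would need to be addressed (or additional structural hypotheses imposed) before the lower-bound direction, and hence \eqref{opt robust value}, is established.
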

\begin{proof}
\bluetoo{This follows} from Theorem \ref{dropout value function theorem}.
\end{proof}

\subsection{\bluetoo{Performance of the Robust Policy}}
The next two results will relate the performance of a single policy across both the original system $\mm$ and the robust model $\mm_R$. Given the value of a policy on the pre-dropout system, this first result evaluates the robust policy's performance.

\begin{lemma} \label{other vr representation}
    Let all agents have identical probabilities $\beta_n \equiv \beta$. The performance of some policy $\pi$ on the robust model $\mm_R$ and the original system $\mm$ can be related as,
    \begin{align}
        V^{\pi}_R(x) &= \beta \E_W\E_{x_{-\N_W}} V^{\pi}(x|W = \mathbf{1}).
    \end{align}
\end{lemma}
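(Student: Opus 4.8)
The plan is to exploit the fact that under identical dropout probabilities the robust MDP collapses to a uniformly rescaled copy of the pre-dropout MDP, and then to invoke Theorem \ref{dropout value function theorem} directly. The entire substance of the lemma lives in a single reward-scaling observation; everything else is linearity plus an appeal to the already-proven representation.

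First I would simplify the robust reward. From Definition \ref{robust MDP}, $r_n^R(x_n,\alpha_n) = \beta_n r_n(x_n,\alpha_n|w_n=1)$. Setting $\beta_n \equiv \beta$ and summing over agents gives $r^R(x,\abold) = \beta\sum_{n\in\N} r_n(x_n,\alpha_n) = \beta\, r(x,\abold)$, so the robust reward is exactly the original pre-dropout reward scaled by the scalar $\beta$.

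Next I would observe that the robust MDP $\mm_R$ and the pre-dropout MDP $(\mm|W=\mathbf{1})$ share the same state space $\X$, action space $\A$, transition kernel $T$, and discount $\gamma$, differing only in that the reward is multiplied by $\beta$. Since the value function is linear in the reward for a fixed policy and transition kernel — both systems induce the identical trajectory distribution under $\pi$ and $T$, and the discounted return of every trajectory is simply scaled by $\beta$ — it follows that $J^{\pi}(x) = \beta\, V^{\pi}(x|W=\mathbf{1})$. Equivalently, $\beta V^{\pi}$ is seen to be the unique fixed point of the robust Bellman recursion $J^{\pi}(x)=\E_{\abold}[r^R(x,\abold)+\gamma\E_{x'}[J^{\pi}(x')|x,\abold]]$ by substituting $r^R=\beta r$ and dividing through by $\beta$.

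Finally I would substitute into the representation already established in Theorem \ref{dropout value function theorem}, namely $V^{\pi}_R(x) = \E_W\E_{x_{-\N_W}} J^{\pi}(x)$, and pull the constant $\beta$ outside both expectations to obtain $V^{\pi}_R(x) = \beta\,\E_W\E_{x_{-\N_W}} V^{\pi}(x|W=\mathbf{1})$, which is the claim. I do not anticipate a genuine obstacle here; the only point that must be handled with care is that the clean factorization $r^R = \beta r$ relies essentially on the homogeneity hypothesis $\beta_n \equiv \beta$, since with heterogeneous probabilities the robust reward is a nonuniform weighted rescaling for which no scalar can be extracted.
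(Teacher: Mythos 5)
Your proposal is correct, and it reaches the result by a genuinely shorter route than the paper. The paper keeps the decomposition $r^R(x,\abold)=\beta\, r(x,\abold|W=\mathbf{1})+(1-\beta)\, r(x,\abold|W=\mathbf{0})$ intact and runs an explicit induction on the Bellman iterates to establish the intermediate identity $J^{\pi}(x)=\beta\, J^{\pi}(x|W=\mathbf{1})+(1-\beta)\,J^{\pi}(x|W=\mathbf{0})$, only at the very end invoking $r(x,\abold|W=\mathbf{0})=0$ to kill the second term before substituting into Theorem \ref{dropout value function theorem}. You instead use $r_n(x_n,\alpha_n|w_n=0)=0$ at the outset to collapse the robust reward to $r^R=\beta\, r$, note that $\mm_R$ and the pre-dropout system share $\X$, $\A$, $T$, and $\gamma$, and conclude $J^{\pi}=\beta\, V^{\pi}(\cdot|W=\mathbf{1})$ by linearity of the discounted return in the reward (equivalently, by checking that $\beta V^{\pi}$ is the fixed point of the robust Bellman recursion); substitution into Theorem \ref{dropout value function theorem} then finishes the argument in one line. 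Both proofs are valid and both ultimately rest on Theorem \ref{dropout value function theorem}; your version is cleaner and makes the role of the homogeneity hypothesis $\beta_n\equiv\beta$ transparent, while the paper's convex-combination identity is the more informative intermediate statement — it would survive essentially unchanged if the dropped-agent rewards were some nonzero constant rather than zero, which is presumably why the authors carried the $W=\mathbf{0}$ term through the induction rather than discarding it immediately.
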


\begin{proof}
    \ifarxiv
    Consider the value function $J$ as in \eqref{J def} with the expected reward function and original transitions. 
    Base case of induction:
    \ifarxiv
    \begin{align*}
        J^{\pi}_0(x) &=\E_{\abold}[r^R(x,\alpha)],\\
        &=\E_{\abold}\E_W[r(x,\alpha|W)]\\
        &=\E_{\abold}\sum_{n\in\N}\E_{w_n}[r_n(x_n,\alpha_n|w_n)],\\
        &= \E_{\abold}\sum_{n\in\N}\left(\beta[r_n(x_n,\alpha_n|w_n = 1)]+(1-\beta)[r_n(x_n,\alpha_n|w_n = 0)]\right),\\
        &=\beta\E_{\abold}[r(x,\abold|W = \mathbf{1})]+(1-\beta)\E_{\abold}[r(x,\abold|W= \mathbf{0})].
    \end{align*}
    \else
    \begin{align*}
        J^{\pi}_0(x) &=\E_{\abold}[r^R(x,\alpha)],\\
        &=\E_{\abold}\E_W[r(x,\alpha|W)],\\
        &=\E_{\abold}\sum_{n\in\N}\E_{w_n}[r_n(x_n,\alpha_n|w_n)],\\
        \begin{split}
            &= \E_{\abold}\sum_{n\in\N}\left(\beta[r_n(x_n,\alpha_n|w_n = 1)]\\
            &\quad+ (1-\beta)[r_n(x_n,\alpha_n|w_n = 0)]\right),
        \end{split}\\
        &=\beta\E_{\abold}[r(x,\abold|W = \mathbf{1})]+(1-\beta)\E_{\abold}[r(x,\abold|W= \mathbf{0})].
    \end{align*}
    \fi
    Then:
    \ifarxiv
    \begin{align}
        J^{\pi}_1(x) &= \E_{\abold}[r^R(x,\abold)+\gamma \E_{x'}[J_0^{\pi}(x')|x,\abold]],\nonumber\\
        &=\E_{\abold}[\E_{W}[r(x,\abold|W)]+\gamma \E_{x'}[J_0^{\pi}(x')|x,\abold]],\nonumber\\
         &=\E_{\abold}\left[\sum_{n\in\N}\E_{w_n}[r_n(x_n,\alpha_n|w_n)]+\gamma \E_{x'}[\beta\E_{\abold}[r(x',\abold|W = \mathbf{1})]+(1-\beta)\E_{\abold}[r(x',\abold|W= \mathbf{0})]|x,\abold]\right],\nonumber\\
        \begin{split}
            &=\E_{\abold}[\beta r(x,\abold,|W = \mathbf{1}) + (1-\beta) r(x,\abold,|W = \mathbf{0})\\
            &\quad +\gamma \E_{x'}[\beta\E_{\abold}[r(x',\abold|W = \mathbf{1})]+(1-\beta)\E_{\abold}[r(x',\abold|W= \mathbf{0})]|x,\abold]],\nonumber\\
        \end{split}\\
        \begin{split}
             &=\E_{\abold}[\beta r(x,\abold,|W = \mathbf{1}) +\gamma \E_{x'}[\beta\E_{\abold'}[r(x',\abold'|W = \mathbf{1})]|x,\abold]] \\
             &\quad + \E_{\abold}[(1-\beta) r(x,\abold,|W = \mathbf{0})+\gamma \E_{x'}[(1-\beta)\E_{\abold'}[r(x',\abold'|W= \mathbf{0})]|x,\abold]],\nonumber\\
         \end{split}\\
        &= \beta \Tb^{\pi}J_0(x|W=\mathbf{1}) + (1-\beta)\Tb^{\pi}J_0(x|W=\mathbf{0}).\label{induction 0}
    \end{align}
    \else
    \begin{align}
        &J^{\pi}_1(x) \nonumber\\
        &= \E_{\abold}[r^R(x,\abold)+\gamma \E_{x'}[J_0^{\pi}(x')|x,\abold]],\nonumber\\
        &=\E_{\abold}[\E_{W}[r(x,\abold|W)]+\gamma \E_{x'}[J_0^{\pi}(x')|x,\abold]],\nonumber\\
        \begin{split}
            &=\E_{\abold}\Bigg[\sum_{n\in\N}\E_{w_n}[r_n(x_n,\alpha_n|w_n)]\\
            &\quad+\gamma \E_{x'}[\beta\E_{\abold}[r(x',\abold|W = \mathbf{1})]\\
            &\quad+(1-\beta)\E_{\abold}[r(x',\abold|W= \mathbf{0})]|x,\abold]\Bigg],\nonumber
        \end{split}\\
        \begin{split}
            &=\E_{\abold}[\beta r(x,\abold,|W = \mathbf{1})\\
            &\quad + (1-\beta) r(x,\abold,|W = \mathbf{0})\\
            &\quad +\gamma \E_{x'}[\beta\E_{\abold}[r(x',\abold|W = \mathbf{1})]\\
            &\quad +(1-\beta)\E_{\abold}[r(x',\abold|W= \mathbf{0})]|x,\abold]],\nonumber\\
        \end{split}\\
        \begin{split}
             &=\E_{\abold}[\beta r(x,\abold,|W = \mathbf{1}) \\
             &\quad +\gamma \E_{x'}[\beta\E_{\abold'}[r(x',\abold'|W = \mathbf{1})]|x,\abold]] \\
             &\quad + \E_{\abold}[(1-\beta) r(x,\abold,|W = \mathbf{0})\\
             &\quad +\gamma \E_{x'}[(1-\beta)\E_{\abold'}[r(x',\abold'|W= \mathbf{0})]|x,\abold]]\nonumber,\\
         \end{split}\\
        &= \beta \Tb^{\pi}J_0(x|W=\mathbf{1}) + (1-\beta)\Tb^{\pi}J_0(x|W=\mathbf{0}).\label{induction 0}
    \end{align}
    \fi

    Next, the inductive step must be established. Assume the following expression holds for iteration $k$:
    \begin{align*}
        J^{\pi}_k(x) &=\beta \Tb^{\pi}J_{k-1}(x|W=\mathbf{1}) + (1-\beta)\Tb^{\pi}J_{k-1}(x|W=\mathbf{0}),\\
        \begin{split}
            &=\beta \E_{\abold}[r(x,\abold|W = \mathbf{1}) + \gamma\E_{x'}[J_{k-1}^{\pi}(x'|W = \mathbf{1})|x,\abold)]] \\
            &\quad (1-\beta) \E_{\abold}[r(x,\abold|W = \mathbf{0}) + \gamma\E_{x'}[J_{k-1}^{\pi}(x'|W = \mathbf{0})|x,\abold)]] .
        \end{split}
    \end{align*}
    Then for $k+1$:
    \ifarxiv
    \begin{align}
        J_{k+1}^{\pi}(x) &= \E_{\abold}[r^R(x,\abold)+\gamma \E_{x'}[J_k^{\pi}(x')|x,\abold]],\nonumber\\
        &=\E_{\abold}[\E_{W}[r(x,\abold|W)]+\gamma \E_{x'}[J_k^{\pi}(x')|x,\abold]],\nonumber\\
        &=\E_{\abold}\left[\sum_{n\in\N}\E_{w_n}[r_n(x_n,\alpha_n|w_n)]+\gamma \E_{x'}[\Tb^{\pi}J_{k-1}(x'|W=\mathbf{1}) + (1-\beta)\Tb^{\pi}J_{k-1}(x'|W=\mathbf{0})|x,\abold]\right],\nonumber\\
        \begin{split}
            &= \E_{\abold}[\beta r(x,\abold,|W = \mathbf{1}) + (1-\beta) r(x,\abold,|W = \mathbf{0})\\
            &\quad+ \gamma \E_{x'}[\beta \E_{\abold'}[r(x',\abold'|W = \mathbf{1}) + \gamma\E_{x''}[J_{k-1}^{\pi}(x''|W = \mathbf{1})|x',\abold')]] \\
            &\quad +(1-\beta) \E_{\abold'}[r(x',\abold'|W = \mathbf{0}) + \gamma\E_{x''}[J_{k-1}^{\pi}(x''|W = \mathbf{0})|x',\abold')]] |x,\abold]],\nonumber
        \end{split}\\
        \begin{split}
            &=\E_{\abold}[\beta r(x,\abold,|W = \mathbf{1})+\gamma \E_{x'}[\beta \E_{\abold'}[r(x',\abold'|W = \mathbf{1}) + \gamma\E_{x''}[J_{k-1}^{\pi}(x''|W = \mathbf{1})|x',\abold')]]|x,\abold]]\\
            &\quad +\E_{\abold}[(1-\beta) r(x,\abold,|W = \mathbf{0})+\gamma \E_{x'}[(1-\beta) \E_{\abold'}[r(x',\abold'|W = \mathbf{0}) + \gamma\E_{x''}[J_{k-1}^{\pi}(x''|W = \mathbf{0})|x',\abold')]]|x,\abold],\nonumber
        \end{split}\\
        \begin{split}
            &= \beta \E_{\abold}[r(x,\abold,|W = \mathbf{1}) + \gamma \E_{x'}[J_{k}^{\pi}(x'|W = \mathbf{1})|x,\abold]]\\
            &\quad+ (1-\beta) \E_{\abold}[r(x,\abold,|W = \mathbf{0}) + \gamma \E_{x'}[J_{k}^{\pi}(x'|W = \mathbf{0})|x,\abold]],\nonumber
        \end{split}\\
        &=\beta \Tb^{\pi}J_{k}(x|W=\mathbf{1}) + (1-\beta)\Tb^{\pi}J_{k}(x|W=\mathbf{0})\label{induction k}
    \end{align}
    \else
    \begin{align}
        &J_{k+1}^{\pi}(x)\nonumber \\
        &= \E_{\abold}[r^R(x,\abold)+\gamma \E_{x'}[J_k^{\pi}(x')|x,\abold]],\nonumber\\
        &=\E_{\abold}[\E_{W}[r(x,\abold|W)]+\gamma \E_{x'}[J_k^{\pi}(x')|x,\abold]],\nonumber\\
        \begin{split}
             &=\E_{\abold}\Bigg[\sum_{n\in\N}\E_{w_n}[r_n(x_n,\alpha_n|w_n)]\\
             &\quad +\gamma \E_{x'}[\Tb^{\pi}J_{k-1}(x'|W=\mathbf{1})\\
             &\quad + (1-\beta)\Tb^{\pi}J_{k-1}(x'|W=\mathbf{0})|x,\abold]\Bigg],\nonumber
        \end{split}\\
        \begin{split}
            &= \E_{\abold}[\beta r(x,\abold,|W = \mathbf{1}) + (1-\beta) r(x,\abold,|W = \mathbf{0})\\
            &\quad+ \gamma \E_{x'}[\beta \E_{\abold'}[r(x',\abold'|W = \mathbf{1}) \\
            &\quad \quad+ \gamma\E_{x''}[J_{k-1}^{\pi}(x''|W = \mathbf{1})|x',\abold')]] \\
            &\quad +(1-\beta) \E_{\abold'}[r(x',\abold'|W = \mathbf{0}) \\
            &\quad \quad+ \gamma\E_{x''}[J_{k-1}^{\pi}(x''|W = \mathbf{0})|x',\abold')]] |x,\abold]],\nonumber
        \end{split}\\
        \begin{split}
            &=\E_{\abold}[\beta r(x,\abold,|W = \mathbf{1})\\
            &\quad +\gamma \E_{x'}[\beta \E_{\abold'}[r(x',\abold'|W = \mathbf{1}) \\
            &\quad \quad + \gamma\E_{x''}[J_{k-1}^{\pi}(x''|W = \mathbf{1})|x',\abold')]]|x,\abold]]\\
            &\quad +\E_{\abold}[(1-\beta) r(x,\abold,|W = \mathbf{0})\\
            &\quad \quad +\gamma \E_{x'}[(1-\beta) \E_{\abold'}[r(x',\abold'|W = \mathbf{0})\\
            &\quad \quad + \gamma\E_{x''}[J_{k-1}^{\pi}(x''|W = \mathbf{0})|x',\abold')]]|x,\abold],\nonumber
        \end{split}\\
        \begin{split}
            &= \beta \E_{\abold}[r(x,\abold,|W = \mathbf{1}) + \gamma \E_{x'}[J_{k}^{\pi}(x'|W = \mathbf{1})|x,\abold]]\\
            &\quad+ (1-\beta) \E_{\abold}[r(x,\abold,|W = \mathbf{0})\\
            &\quad \quad + \gamma \E_{x'}[J_{k}^{\pi}(x'|W = \mathbf{0})|x,\abold]],\nonumber
        \end{split}\\
        &=\beta \Tb^{\pi}J_{k}(x|W=\mathbf{1}) + (1-\beta)\Tb^{\pi}J_{k}(x|W=\mathbf{0})\label{induction k}
    \end{align}
    \fi

    Taking $k\to\infty$:
    \begin{align*}
         &J^{\pi}(x) \\
         &= \lim_{k\to\infty}J^{\pi}_k(x)\\
         &=\left[\beta \lim_{k\to\infty}J^{\pi}_k(x|W = \mathbf{1})+(1-\beta)\lim_{k\to\infty}J^{\pi}_k(x|W = \mathbf{0})\right]\\
         &= \beta J^{\pi}(x|W = \mathbf{1})+(1-\beta)J^{\pi}(x|W = \mathbf{0})
    \end{align*}

    By Theorem \ref{dropout value function theorem},
    \begin{align}
        V_R^{\pi}(x) &= \E_W\E_{x_{-\N_W}}J^{\pi}(x)\nonumber\\
        &= \E_W\E_{x_{-\N_W}}[\beta J^{\pi}(x|W = \mathbf{1}) + (1-\beta)J^{\pi}(x|W = \mathbf{0})]\nonumber\\
        &= \E_W\E_{x_{-\N_W}}\left[\beta J^{\pi}(x|W = \mathbf{1}) + \frac{1-\beta}{1-\gamma}\rbar\right]\label{geo series}\\
        &= \beta \E_W V^{\pi}(x|W = \mathbf{1})+ \frac{1-\beta}{1-\gamma}\rbar\nonumber
    \end{align}
    
    As $r(x,\abold|W = \mathbf{0})= 0)$, the term $(1-\beta)\Tb^{\pi}V_0(x|W=\mathbf{0})$ in \eqref{induction 0} becomes $0$ and the term $(1-\beta)\Tb^{\pi}V_k(x|W=\mathbf{0})$ in \eqref{induction k} becomes  $0$. Equation \ref{geo series} follows.
    \else
    This proof is in the extended version \cite{extended}.
    \fi
\end{proof}

The next lemma establishes the optimality gap produced by controlling the pre-dropout system with the optimal robust policy. This is the maximum loss in error accrued by controlling the system with the robust policy if dropout never occurred. 

\begin{lemma}\label{opt gap}
Controlling the pre-dropout MDP with $\pi_R^*$ yields,
\ifarxiv
\begin{equation}
    V^*(x|W = \mathbf{1}) - V^{\pi_R^*}(x) \leq (1-\beta^N)[V^*(x|W=\mathbf{1})- V^{\pi_R^*}(x|W = \mathbf{1})].
\end{equation}
\else
\begin{equation}
\begin{split}
    &V^*(x|W = \mathbf{1}) - V^{\pi_R^*}(x) \\
    &\quad \leq (1-\beta^N)[V^*(x|W=\mathbf{1})- V^{\pi_R^*}(x|W = \mathbf{1})].
\end{split}
\end{equation}
\fi
\end{lemma}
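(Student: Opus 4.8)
The plan is to rewrite the claim as the equivalent lower bound $V^{\pi_R^*}(x|W=\mathbf{1}) \geq \beta^N V^*(x|W=\mathbf{1}) + (1-\beta^N)V^{\pi_U}(x|W=\mathbf{1})$ on the value the robust policy attains while the full system is still running. Since all $\beta_n\equiv\beta$, Definition \ref{robust MDP} gives $r^R=\beta\, r(\cdot,\cdot|W=\mathbf{1})$, so the robust-MDP-optimal policy coincides with the pre-dropout optimal policy $\pi^*$, and by Theorem \ref{dropout value function theorem} we may take $\pi_R^*(\abold|x)=\E_W\E_{x_{-\N_W}}\pi^*(\abold|x)$. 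The structural fact I would lean on is that the realization $W=\mathbf{1}$ has probability exactly $\beta^N$ and returns the un-marginalized optimal policy $\pi^*$, whereas every other realization only marginalizes the optimal actions of the dropped agents.

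Next I would apply the performance-difference identity on the pre-dropout MDP, $V^*(x)-V^{\pi_R^*}(x)=\sum_{t\ge0}\gamma^t\,\E_{x_t\sim\pi_R^*}\big[\,\E_{\abold\sim\pi_R^*(\cdot|x_t)}\big(V^*(x_t)-Q^*(x_t,\abold)\big)\big]$. Transition independence and reward separability make $Q^*$ additive across agents, $V^*(s)-Q^*(s,\abold)=\sum_{n}\big(V_n^*(s)-Q_n^*(s,\alpha_n)\big)$, with every summand nonnegative. Substituting the mixture form of $\pi_R^*$ into the inner expectation, the present agents play optimally and contribute zero, so only the dropped agents survive; collecting the realization weights and using independence of the dropout flags turns the weight of agent $n$ into the probability $\E[\mathbf{1}\{w_n=0\}]=1-\beta$ that it is dropped and annihilates the all-present event. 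This should reduce the suboptimality to $V^*-V^{\pi_R^*}=(1-\beta)\sum_t\gamma^t\,\E_{x_t\sim\pi_R^*}[\sum_n \bar g_n(x_t)]$, where $\bar g_n$ is the one-step gap of agent $n$'s marginalized-optimal action, while the same identity for the uniform policy reads $V^*-V^{\pi_U}=\sum_t\gamma^t\,\E_{x_t\sim\pi_U}[\sum_n g_{U,n}(x_t)]$.

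Finally I would compare the two expressions. Writing $1-\beta^N=(1-\beta)\sum_{k=0}^{N-1}\beta^k$ shows $\tfrac{1-\beta}{1-\beta^N}\le 1$, which supplies exactly the factor needed to pass from the per-agent weight $1-\beta$ to the target weight $1-\beta^N$, so the claim comes down to bounding $\sum_t\gamma^t\E_{x_t\sim\pi_R^*}[\sum_n\bar g_n(x_t)]$ by $\sum_t\gamma^t\E_{x_t\sim\pi_U}[\sum_n g_{U,n}(x_t)]$. I expect this to be the main obstacle: the two sums are integrated against different discounted occupancy measures, and a naive pointwise domination $\bar g_n\le g_{U,n}$ is \emph{false}, since the marginalized-optimal action can have a larger instantaneous gap than uniform at a transient state. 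The correct mechanism is that $\pi_R^*$ drives the system into near-optimal, zero-gap states far faster than the purely uniform policy, so its occupancy of the high-gap region is small; making this rigorous requires a coupling or a monotone value-iteration argument in which the $\beta^N$-weighted optimal backup is applied exactly and the complementary backup's Bellman residual is dominated, realization-by-realization and agent-by-agent, by that of $\pi_U$. Establishing this occupancy/residual domination, rather than the elementary algebra surrounding it, is where the real work lies.
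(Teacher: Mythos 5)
Your proposal does not close the proof, and it also contains an incorrect intermediate claim. First, the inaccuracy: after substituting the mixture form $\pi_R^*=\E_{W'}[\pi^*_{W'}]$ into the performance-difference identity, you assert that ``the present agents play optimally and contribute zero.'' Under a realization $W'\neq\mathbf{1}$, the present agents play the policy optimal for the \emph{marginalized} system $(\mm|W')$, not the pre-dropout-optimal actions, so their advantage against the pre-dropout $Q^*$ does not vanish pointwise; only the all-present realization (probability $\beta^N$) gives a zero gap for every agent. Your per-agent weight of $1-\beta$ therefore does not follow as stated. Second, and more importantly, you candidly identify the step you cannot complete: comparing $\sum_t\gamma^t\E_{x_t\sim\pi_R^*}[\sum_n\bar g_n(x_t)]$ against $\sum_t\gamma^t\E_{x_t\sim\pi_U}[\sum_n g_{U,n}(x_t)]$, which are integrated against different discounted occupancy measures. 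That comparison is exactly the missing idea, and no coupling or monotone value-iteration argument is supplied, so the proof attempt is incomplete at its crux.

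The paper avoids the occupancy-measure comparison entirely by exploiting the factored structure at the level of \emph{local value functions}, as in \eqref{local values}. It writes $V^{\pi_R^*}(x|W=\mathbf{1})=\E_{W'}V^{\pi^*_{W'}}(x|W=\mathbf{1})$ and splits each value into per-agent terms $\sum_n V(x_n|W=\mathbf{1})$. For agents $n\in\N_{W'}$, the optimality of $\pi^*_{W'}$ for their local values makes that block of terms nonpositive, so it is discarded. For agents $n\notin\N_{W'}$, the augmentation \eqref{agument} assigns them exactly the uniform policy, so their local values are \emph{identically} $V^{\pi_U}(x_n|W=\mathbf{1})$ --- no domination between trajectories of two different policies is ever needed, because the comparison is made termwise on the separable value decomposition rather than on discounted state occupancies. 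The expectation over $W'$ then reduces to a binomial sum yielding the $(1-\beta^N)$ factor. If you want to salvage your route, you would need to replace the occupancy/residual domination step with this local-value identification; as written, the gap is genuine.
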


\begin{proof}  Let $\pi^\dagger_W(\abold|x) = \pi_R^*(\abar|\xbar)\prod_{n\notin\N_W}\pi_n^*(\abold_n|x)$. Decompose the error as:
    \ifarxiv
    \begin{align*}
        &V^*(x) - V^{\pi_R^*}(x)\\
        &= V^*(x|W = \mathbf{1}) - V^{\pi_R^*}(x|W = \mathbf{1}),\\
        &=V^*(x|W = \mathbf{1}) -\E_{W'} V^{\pi^\dagger_{W'}}(x|W = \mathbf{1}),\\
        &=\E_{W'}[V^*(x|W = \mathbf{1}) - V^{\pi^\dagger_{W'}}(x|W = \mathbf{1})],\\
        &=\E_{W'}\left[\sum_{n\in \N_{W'}}V^*(x_n|W = \mathbf{1}) + \sum_{n\notin\N_{W'}}V^*(x_n|W = \mathbf{1}) - \sum_{n\in\N_{W'}}V^{\pi^\dagger_{W'}}(x_n|W = \mathbf{1}) - \sum_{n\notin\N_{W'}}V^{\pi^\dagger_{W'}}(x_n|W = \mathbf{1})\right].
    \end{align*}
    \else
    $V^*(x) - V^{\pi_R^*}(x)= V^*(x|W = \mathbf{1}) - V^{\pi_R^*}(x|W = \mathbf{1}) = V^*(x|W = \mathbf{1}) -\E_{W'} V^{\pi^\dagger_{W'}}(x|W = \mathbf{1})=\E_{W'}[V^*(x|W = \mathbf{1}) - V^{\pi^\dagger_{W'}}(x|W = \mathbf{1})]=\E_{W'}\Big[\sum_{n\in \N_{W'}}V^*(x_n|W = \mathbf{1}) + \sum_{n\notin\N_{W'}}V^*(x_n|W = \mathbf{1}) - \sum_{n\in\N_{W'}}V^{\pi^\dagger_{W'}}(x_n|W = \mathbf{1}) -\sum_{n\notin\N_{W'}}V^{\pi^\dagger_{W'}}(x_n|W = \mathbf{1})\Big].$
    \fi
    Using separable properties of factored MDPs, terms may be separated and canceled.
    The difference is:
    \ifarxiv
    \begin{align*}
        &\E_{W'}\left[\sum_{n\in \N_{W'}}V^*(x_n|W = \mathbf{1})+ \sum_{n\notin\N_{W'}}V^*(x_n|W = \mathbf{1}) - \sum_{n\in\N_{W'}}V^{\pi^\dagger_{W'}}(x_n|W = \mathbf{1})- \sum_{n\notin\N_{W'}}V^{\pi^\dagger_{W'}}(x_n|W = \mathbf{1})\right]\\
        &=\E_{W'}\left[\sum_{n\in\N_{W'}}V^*(x_n|W = \mathbf{1}) - \sum_{n\in\N_{W'}}V^{\pi^\dagger_{W'}}(x_n|W = \mathbf{1})\right]\\
        &\leq\E_{W'}\left[\sum_{n\in\N_{W'}}V^*(x_n|W = \mathbf{1}) - \sum_{n\in\N_{W'}}V^{\pi_R^*}(x_n|W = \mathbf{1})\right]
    \end{align*}
    \else
    $\E_{W'}\Big[\sum_{n\in \N_{W'}}V^*(x_n|W = \mathbf{1})+ \sum_{n\notin\N_{W'}}V^*(x_n|W = \mathbf{1}) - \sum_{n\in\N_{W'}}V^{\pi^\dagger_{W'}}(x_n|W = \mathbf{1})- \sum_{n\notin\N_{W'}}V^{\pi^\dagger_{W'}}(x_n|W = \mathbf{1})\Big]$ $=\E_{W'}\big[\sum_{n\in\N_{W'}}V^*(x_n|W = \mathbf{1}) - \sum_{n\in\N_{W'}}V^{\pi^\dagger_{W'}}(x_n|W = \mathbf{1})\big]\leq$ $\E_{W'}\big[\sum_{n\in\N_{W'}}V^*(x_n|W = \mathbf{1}) - \sum_{n\in\N_{W'}}V^{\pi_R^*}(x_n|W = \mathbf{1})\big]$
    \fi
    as $\pi_{R,n}^*$ is out-performed by $\pi^*$ for $W=\mathbf{1}$. 
    \vspace{1mm}
     The value difference \bluetoo{expression is equal to}:
     \ifarxiv
    \begin{align*}
        &\E_{W'}\sum_{n\in\N_{W'}}\left[V^*(x_n|W = \mathbf{1}) - V^{\pi_R^*}(x_n|W = \mathbf{1})\right]\\
        &=\sum_{W'}P(W')\sum_{n\in\N_{W'}}\left[V^*(x_n|W = \mathbf{1}) - V^{\pi_R^*}(x_n|W = \mathbf{1})\right]\\
        &=\sum_{k = 1}^NP(|W' = 1| = k)\sum_{\{W'|\ |W' = 1| =k\}}\sum_{n\in\N_{W'}}\left[V^*(x_n|W = \mathbf{1}) - V^{\pi_R^*}(x_n|W = \mathbf{1})\right]\\
        &=\sum_{k = 1}^N\beta^{k}(1-\beta)^{N-k}{N\choose k}\frac{k}{N}\left[V^*(x|W = \mathbf{1}) - V^{\pi_R^*}(x|W = \mathbf{1})\right]\\
        &=\sum_{k = 1}^N\beta^{k}(1-\beta)^{N-k}{N-1\choose k-1}\left[V^*(x|W = \mathbf{1}) - V^{\pi_R^*}(x|W = \mathbf{1})\right]\\
        &=\sum_{k=0}^{N-1}{N-1\choose k}\beta^{k}(1-\beta)^{N-k}[V^*(x|W=\mathbf{1})- V^{\pi_R^*}(x|W = \mathbf{1})]\\
        &=[V^*(x|W=\mathbf{1})- V^{\pi_R^*}(x|W = \mathbf{1})](1-\beta^N)
    \end{align*}
    \else
    \small
    \begin{align*}
        &\E_{W'}\sum_{n\in\N_{W'}}\left[V^*(x_n|W = \mathbf{1}) - V^{\pi_R^*}(x_n|W = \mathbf{1})\right]\\
        \begin{split}
            &=\sum_{k = 1}^NP(|W' = 1| = k)\sum_{\{W'|\ |W' = 1| =k\}}\times\\
            &\quad \sum_{n\in\N_{W'}}\left[V^*(x_n|W = \mathbf{1}) - V^{\pi_R^*}(x_n|W = \mathbf{1})\right],
        \end{split}\\
        \begin{split}
             &=\sum_{k = 1}^N\beta^{k}(1-\beta)^{N-k}{N\choose k}\frac{k}{N}\times\\
             &\quad \left[V^*(x|W = \mathbf{1}) - V^{\pi_R^*}(x|W = \mathbf{1})\right],
        \end{split}\\
        \begin{split}
            &=\sum_{k = 1}^N\beta^{k}(1-\beta)^{N-k}{N-1\choose k-1} \left[V^*(x|W = \mathbf{1}) - V^{\pi_R^*}(x|W = \mathbf{1})\right],
        \end{split}
        \end{align*}
        \begin{align*}
        \begin{split}
            &=\sum_{k=0}^{N-1}{N-1\choose k}\beta^{k}(1-\beta)^{N-k} [V^*(x|W=\mathbf{1})- V^{\pi_R^*}(x|W = \mathbf{1})],
        \end{split}\\
        &=[V^*(x|W=\mathbf{1})- V^{\pi_R^*}(x|W = \mathbf{1})](1-\beta^N).
    \end{align*}\normalsize\fi
    \vspace{-3mm}\end{proof}

\vspace{-3mm}\subsection{Suboptimality of Pre-Dropout Policy}\label{sec:subopt pre}
Given the structural relationship between the pre- and post-dropout MDPs, a natural question is if this relationship extends to value optimality (without defining the robust MDP). To demonstrate this cannot hold in general, we present a counter example. Consider a system with optimal value $V^*(x|W = \mathbf{1})$ where the post-dropout rewards are defined as the marginalization $r(x,\abold|W) = \E_{\xbar,\abar}r(x,\abold|W = \mathbf{1})$.  Then, $V(x|W) = \E_{x_{-\N_W}}V(x|W = \mathbf{1})$.

Unfortunately, the value $V^*(x|W\neq \mathbf{1})$ cannot be calculated as $\E_{x_{-\N_W}}[V^*(x|W = \mathbf{1})]$. To verify this, define $b_n \equiv w_n$ to force the desired realization. Then,
\begin{equation}
    V_R(x) = V(x|W) = \E_{-\N_W} V(x|W = \mathbf{1})
\end{equation}

To see this, note that the Bellman optimality criterion necessitates that $V^*(s) = \mathbf{T}V^*(s) = \max_{\pi} \mathbb{E}_{\abold\sim \pi}[r(s,\abold)+\gamma V^*(s')]$. Therefore, if $\tilde{V}$ are optimal values, then,
\ifarxiv
\begin{align}
    &\tilde{V}(x|W\neq\mathbf{1})\\
    &=\mathbf{T}\tilde{V}(x|W\neq\mathbf{1}),\nonumber\\
    &=\mathbf{T}\E_{-\N_W}  V^*(x|W = \mathbf{1}),\nonumber\\
    &=\max_{\pi}\E_{\abold\sim\pi}\E_{-\N_W}[r(x,\abold)+\gamma \E_{x'} [V^*(x'|W = \mathbf{1})|x,\abold]]. \label{bellman}
\end{align}
\else
\small
\begin{align}
    &\tilde{V}(x|W\neq\mathbf{1})\\
    &=\mathbf{T}\tilde{V}(x|W\neq\mathbf{1})=\mathbf{T}\E_{-\N_W}  V^*(x|W = \mathbf{1}),\nonumber\\
    &=\max_{\pi}\E_{\abold\sim\pi}\E_{-\N_W}[r(x,\abold)+\gamma \E_{x'} [V^*(x'|W = \mathbf{1})|x,\abold]]. \label{bellman}
\end{align}
\normalsize
\fi

However, we find that,
\ifarxiv
\begin{align}
    &\tilde{V}(x|W\neq\mathbf{1})\\
    &=\E_{-\N_W} V^*(x|W = \mathbf{1}),\nonumber\\
    &=\E_{-\N_W}\E_{\abold\sim\pi^*}[r(x,\abold)+\gamma \E_{x'}[V^*(x|W = \mathbf{1})|x,\abold]],\nonumber\\
    &=\E_{-\N_W}\max_{\pi}\E_{\abold\sim\pi}[r(x,\abold)+\gamma \E_{x'}[V^*(x|W = \mathbf{1})|x,\abold]].\label{margeq}
\end{align}
\else
\small
\begin{align}
    &\tilde{V}(x|W\neq\mathbf{1})\\
    &=\E_{-\N_W} V^*(x|W = \mathbf{1}),\nonumber\\
    &=\E_{-\N_W}\E_{\abold\sim\pi^*}[r(x,\abold)+\gamma \E_{x'}[V^*(x|W = \mathbf{1})|x,\abold]],\nonumber\\
    &=\E_{-\N_W}\max_{\pi}\E_{\abold\sim\pi}[r(x,\abold)+\gamma \E_{x'}[V^*(x|W = \mathbf{1})|x,\abold]].\label{margeq}
\end{align}
\normalsize
\fi

\normalsize
Clearly, \eqref{bellman} and \eqref{margeq} are not guaranteed to coincide as the maximization and marginalization operations do not commute. This calculation has merely evaluated the value of the \emph{new} system under the policy developed for the \emph{old} system. Therefore, one possibility is to evaluate $V^{\pi}(s)$ for other policies $\pi$ and find one such that the principle of optimality holds. In practice, however, it is more reasonable to use the robust MDP formulation to automatically relate all possible realizations of the system.

\section{Model-Free Policy Evaluation}\label{sec: policy evaluation}

\subsection{Method} \label{method}
Given Theorem \ref{marginalization theorem}, policy IS can now be adapted for the objective of estimating values of policies for the robust MDP and for specific post-dropout realizations based on trajectories generated by the pre-dropout system. 

The first step is to represent the desired post-dropout policy $\phi$ in a usable format that satisfies \eqref{agument}. For the robust MDP, the desired policy $\phi$ can be used as-is because the state and action spaces between the compared systems are identical. To perform policy evaluation for a realization $(\mm|W)$, however, the post-dropout policy $\phi'(\abar|\xbar)$ must be augmented to be  a function of $\abold$ and $x$. As $\phi'$ is independent of $-\N_W$, the policies for the removed agents can be augmented with pre-dropout distributions as in \eqref{agument}.

Next, trajectories are generated on the pre-dropout MDP using $\pi$, and the policy IS estimate can be formed. By performing both the sampling and the policy IS routines on the pre-dropout system, the issue of non-cancellation of the transition probabilities in \eqref{ratiobad} is resolved. The estimate can then be transformed to the post-dropout value via marginalization according to Theorem \ref{dropout value function theorem}:

\ifarxiv
\begin{align}
    V_H^{\phi}(x|W) &= \E_{x_{-\N_W}}\E_{\tau\sim q}\left[\frac{p(\tau)}{q(\tau)}\sum_{t=0}^{H-1} \gamma^{t} r_t^R(x_t,\abold_t) \right].\label{policy is vw}
\end{align}
\else
\small\vspace{-3mm}
\begin{align}
    V_H^{\phi}(x|W) &= \E_{x_{-\N_W}}\E_{\tau\sim q}\left[\frac{p(\tau)}{q(\tau)}\sum_{t=0}^{H-1} \gamma^{t} r_t^R(x_t,\abold_t) \right].\label{policy is vw}
\end{align}
\normalsize
\fi

This reweighted sample return may be constructed via any standard policy IS technique, such as the step-wise estimator, weighted estimator, or doubly robust estimator \cite{jiang2016doubly}:

\ifarxiv
\begin{align}
    \Jhat(x) &= \frac{1}{|D|}\sum_{i=1}^{|D|}\left[\frac{p(\tau_i)}{q(\tau_i)}\sum_{t=0}^{H-1} \gamma^{t} r_t^R(x_t,\abold_t) \right],
\end{align}
\else
\small
\begin{align}
    \Jhat(x) &= \frac{1}{|D|}\sum_{i=1}^{|D|}\left[\frac{p(\tau_i)}{q(\tau_i)}\sum_{t=0}^{H-1} \gamma^{t} r_t^R(x_t,\abold_t) \right],
\end{align}
\normalsize
\fi
where $p(\tau_i)/q(\tau_i) = \prod_{t=0}^{H-1}\phi(\abold_t^{(i)}|x_t^{(i)})/\pi(\abold_t^{(i)}|x_t^{(i)})$, a superscript $(i)$ and subscript $t$ means the state or action taken at time $t$ in trajectory $i$, and $|D|$ is the size of the used dataset. Under the assumption of bounded rewards in Assumption \ref{as:reward} and full support in Assumption \ref{support}, the IS estimator will be bounded by some maximum value $\Jmaxt$.

The marginalization step may not be computed with respect to the true stationary distribution as it assumed that the true transition matrix is unknown. However, the empirical stationary distribution $\muhat(x_{-\N_W})$ may be estimated and used in place of the true distribution. To estimate a stationary distribution of a Markov chain empirically, it is beneficial to use one trajectory with a long horizon to achieve the Markov chain's mixing time \blue{\cite{wolfer2020mixing}}; this is the basis of Markov chain Monte Carlo techniques. To this end, we suggest generating a trajectory $x_1,\dots,x_{H_{\mu}}$ of length $H_{\mu}$ separate from the trajectories used for the IS routine. Define the shorthand notation $\xnot = \{x'|x'_{-\N_W} = x_{-\N_W}\}$ to be the set of states whose substates for agents $-\N_W$ match those in the given state $x$. The \emph{empirical stationary distribution} is,
\ifarxiv
\begin{align}
    \muhat(x) &= H_{\mu}^{-1}\sum_{i=0}^{H_{\mu}-1}\mathbbm{1}[x_i = x].\label{muhat}
\end{align}
Similarly to \eqref{mu big}, the empirical distribution  $\muhat(x_{-\N_W})$ is the summation,
\begin{align}
    \muhat(x_{-\N_W}) &= \sum_{\xnot}\muhat(x). \label{muhat big}
\end{align}
\else

\small \vspace{-4mm}
\begin{align}
    \muhat(x) &= H_{\mu}^{-1}\sum_{i=0}^{H_{\mu}-1}\mathbbm{1}[x_i = x].\label{muhat}
\end{align}
\normalsize
Similarly to \eqref{mu big}, the empirical distribution  $\muhat(x_{-\N_W})$ is:

\small\vspace{-4mm}
\begin{align}
    \muhat(x_{-\N_W}) &= \sum_{\xnot}\muhat(x). \label{muhat big}
\end{align}
\normalsize
\fi

To finish the value estimate of the robust MDP, a final marginalization step over $W$ may be completed under the assumption that the dropout probabilities are known \emph{a priori}.
\begin{align}
    V^{\phi}_{RH}(x) &= \E_W[V^{\phi}_H(x|W)]. \label{policy is vr}
\end{align}
The estimators can similarly be constructed as,
\ifarxiv
\begin{align}
    \Vhat_H^{\phi}(x|W) &= \sum_{x_{-\N_W}}\muhat(x_{-\N_W}) \Jhat(x),\nonumber\\
    &=\sum_{x_{-\N_W}}\muhat(x_{-\N_W}) \Jhat(x_{\N_W}, x_{-\N_W}),\label{vhat split jhat}\\
    \Vhat^{\phi}_{RH}(x) &= \sum_W p(W)\Vhat_H^{\phi}(x|W),\nonumber\\
    &=\sum_W \prod_{n=1}^N \beta_n^{w_n}(1-\beta_n)^{1-w_n}\Vhat_H^{\phi}(x|W).\label{vhatrh}
\end{align}
\else
\small
\begin{align}
    \Vhat_H^{\phi}(x|W) &= \sum_{x_{-\N_W}}\muhat(x_{-\N_W}) \Jhat(x),\nonumber\\
    &=\sum_{x_{-\N_W}}\muhat(x_{-\N_W}) \Jhat(x_{\N_W}, x_{-\N_W}),\label{vhat split jhat}\\
    \Vhat^{\phi}_{RH}(x) &= \sum_W p(W)\Vhat_H^{\phi}(x|W),\nonumber\\
    &=\sum_W \prod_{n=1}^N \beta_n^{w_n}(1-\beta_n)^{1-w_n}\Vhat_H^{\phi}(x|W).\label{vhatrh}
\end{align}
\normalsize
\fi

To compute $\Vhat_H^{\phi}(x|W)$, note that $\Jhat_W(x)$ will need to be known for all $\{x'|x'_{\N_W} = x_{\N_W}\}$ as evident in \eqref{vhat split jhat}. Computing $\Vhat^{\phi}_{RH}(x)$ will thus require calculation of  $\Jhat_W(x)$ for all $x$. If $|D|$ trajectories are used to compute each estimate, a total of $|\X||D|$ total trajectories will be needed for policy IS.

If $\beta_n\equiv \beta$, the following simplification can be made to \eqref{vhatrh}:
\begin{align}
    \Vhat^{\phi}_{RH}(x) &=\sum_W \beta^{|W=1|}(1-\beta)^{|W=0|}\Vhat_H^{\phi}(x|W).
\end{align}

 Given the described policy evaluation technique, policy search can be implemented according to the parameters of the application. A key benefit of the policy IS technique is that it resolves the conflicting objectives of controlling the existing system for good value while evaluating post-dropout policies. As discussed in Section \ref{sec:subopt pre}, optimality of the pre- and post-dropout systems may be unrelated, so the pre-dropout system should not necessarily be controlled with the optimal post-dropout policy. By selecting behavioral policies that produce good pre-dropout value, while evaluating target policies for the robust or post-dropout models, good policy evaluation and system execution can be completed. 

\subsection{Performance}
In this section the performance of the estimator $\Vhat^{\phi}_{RH}$ will be analyzed.
\begin{lemma}\label{vmax lemma}
    Define,
    \ifarxiv
    \begin{align}
        V^{\max}_H = \frac{1-\gamma^H}{1-\gamma}\max_{\abold, x} r^R(x,\abold).
    \end{align}
    \else
    $V^{\max}_H = \frac{1-\gamma^H}{1-\gamma}\max_{\abold, x} r^R(x,\abold).$
    \fi
    Then $V^{\max}_H\geq J_H(x)$.
\end{lemma}

\ifarxiv
The following result (Prop. 2.19 from \cite{paulin2015concentration}), provides a concentration bound on the convergence of the empirical stationary distribution. This bound depends on the \emph{mixing time} of the MDP, which is a structural property defined as,
\begin{align}
    &d(t)\triangleq \sup_{x\in\X}d_{TV}(P^t(x,\cdot),\mu),\\
    &t_{mix}\triangleq t_{mix}(1/4).
\end{align}
\else
The following result (Prop. 2.19 from \cite{paulin2015concentration}), provides a concentration bound on the convergence of the empirical stationary distribution. This bound depends on the \emph{mixing time} of the MDP, which is a structural property defined as $d(t)\triangleq \sup_{x\in\X}d_{TV}(P^t(x,\cdot),\mu)$, $t_{mix}\triangleq t_{mix}(1/4).$
\fi
This property is a measure of the time required of a Markov chain for the distance to stationarity to be small. 

\vspace{1mm}
\begin{lemma} \label{paulin lemma}
    Consider a uniformly ergodic Markov chain with a countable state space, unique stationary distribution, and mixing time $t_{mix}$. For any $\delta\geq 0$,
    \vspace{1mm}
    \small\begin{equation}
        P(|d_{TV}(\muhat, \mu) - \E_{\mu} [d_{TV}(\muhat, \mu)]|\geq \delta)\leq 2\exp(-\delta^2H_{\mu}/(4.5 t_{mix})).
    \end{equation}
\end{lemma}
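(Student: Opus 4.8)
The plan is to treat the random quantity $d_{TV}(\muhat,\mu)$ as a deterministic function $f(x_1,\dots,x_{H_\mu})$ of the single long trajectory used to build the empirical distribution in \eqref{muhat}, and then to apply a McDiarmid-type (bounded differences) concentration inequality adapted to dependent Markov samples. The first step is to establish that $f$ has bounded differences with per-coordinate constant $1/H_\mu$. Fix a coordinate $i$ and replace $x_i$ by any other state while holding the remaining samples fixed; by the definition of $\muhat$ this moves mass $1/H_\mu$ from one bin to another, so the perturbed empirical measure $\muhat'$ satisfies $d_{TV}(\muhat,\muhat')\leq 1/H_\mu$. The triangle inequality for total variation then gives $|f(x)-f(x')| = |d_{TV}(\muhat,\mu)-d_{TV}(\muhat',\mu)|\leq d_{TV}(\muhat,\muhat')\leq 1/H_\mu$, hence $c_i = 1/H_\mu$ and $\sum_{i=1}^{H_\mu} c_i^2 = 1/H_\mu$.

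The second step is to invoke the concentration inequality for functions of uniformly ergodic Markov chains from \cite{paulin2015concentration}. Because $x_1,\dots,x_{H_\mu}$ are dependent, classical McDiarmid does not apply directly; instead one uses the Marton-coupling and spectral approach, in which the independence hypothesis is replaced by a mixing condition and the effective variance proxy inflates by a factor proportional to the mixing time $t_{mix}$. Uniform ergodicity is precisely what guarantees $t_{mix}<\infty$ and that the coupling contracts, so the hypotheses of the relevant proposition are satisfied. Substituting the bounded-difference budget $\sum_i c_i^2 = 1/H_\mu$ together with the numerical constant supplied by that result yields
\begin{equation*}
    P(|f - \E_\mu f|\geq\delta)\leq 2\exp\left(-\frac{2\delta^2}{9\,t_{mix}\sum_{i} c_i^2}\right) = 2\exp\left(-\frac{\delta^2 H_\mu}{4.5\,t_{mix}}\right),
\end{equation*}
which is exactly the claimed bound, centered at $\E_\mu[d_{TV}(\muhat,\mu)]$ rather than at $0$.

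The genuinely hard part is the second step, and it is imported wholesale from \cite{paulin2015concentration}: deriving the $t_{mix}$ factor in the exponent from scratch would require constructing the Marton coupling and the associated martingale/transport argument that converts a one-step mixing bound into the stated variance proxy. By contrast, the bounded-differences computation in the first step is elementary. Consequently my proof reduces to verifying that $d_{TV}(\muhat,\mu)$ fits the hypotheses of Paulin's proposition with $c_i = 1/H_\mu$ and confirming that these constants reproduce the exponent $\delta^2 H_\mu/(4.5\,t_{mix})$; the only subtlety to flag is that the centering is at the expectation $\E_\mu[d_{TV}(\muhat,\mu)]$, which is inherited directly from the cited inequality and requires no separate argument here.
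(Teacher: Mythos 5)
Your proposal is correct and matches the paper's treatment: the paper states this lemma as a direct citation of Proposition 2.19 of \cite{paulin2015concentration} with no proof of its own, and your argument is precisely how that proposition is obtained there --- the bounded-differences verification $c_i = 1/H_\mu$ for $d_{TV}(\muhat,\mu)$ (via the reverse triangle inequality after perturbing one sample) followed by Paulin's McDiarmid-type inequality for uniformly ergodic Markov chains, whose exponent $2\delta^2/(9\,t_{mix}\sum_i c_i^2)$ indeed reduces to $\delta^2 H_\mu/(4.5\,t_{mix})$. The centering at $\E_{\mu}[d_{TV}(\muhat,\mu)]$ rather than at zero is, as you correctly flag, inherited from the cited inequality and needs no further argument.
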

\normalsize

Under the assumptions of a finite state space in the problem formulation and ergodicity in Assumption \ref{ergodic ass}, we will satisfy the necessary conditions to apply Lemma \ref{paulin lemma}. 
\begin{lemma} \label{bounded exp}
    For stationary chains, the expected total variational distance between the empirical distributional and the stationary distribution is bounded and inversely proportional to $H_{\mu}$:  $\E_{\mu} [d_{TV}(\muhat, \mu)]\leq C(H_{\mu})$.
\end{lemma}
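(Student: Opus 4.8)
The plan is to reduce the total variation distance to the $\ell_1$ deviation of the empirical frequencies from their means, and then control each coordinate by its standard deviation. Writing $d_{TV}(\muhat,\mu) = \tfrac{1}{2}\sum_{x\in\X}|\muhat(x)-\mu(x)|$ and taking expectations, linearity gives $\E_{\mu}[d_{TV}(\muhat,\mu)] = \tfrac{1}{2}\sum_{x\in\X}\E_{\mu}|\muhat(x)-\mu(x)|$. Because the chain is initialized at stationarity, each empirical frequency $\muhat(x) = H_{\mu}^{-1}\sum_{i=0}^{H_{\mu}-1}\mathbbm{1}[x_i=x]$ is unbiased, $\E_{\mu}[\muhat(x)] = \mu(x)$, so by Jensen's inequality (concavity of the square root) $\E_{\mu}|\muhat(x)-\mu(x)| \le \sqrt{\var(\muhat(x))}$. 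The problem thus reduces to bounding the variance of a time average of the bounded indicator $f_x(y)=\mathbbm{1}[y=x]$.

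For the variance, I would expand $\var(\muhat(x)) = H_{\mu}^{-2}\sum_{i,j=0}^{H_{\mu}-1}\mathrm{Cov}(f_x(x_i),f_x(x_j))$ and use stationarity to write each term as a function of the lag $k=|i-j|$ only, giving $\var(\muhat(x)) \le H_{\mu}^{-1}\big(\gamma_0 + 2\sum_{k\ge 1}|\gamma_k|\big)$, where $\gamma_k = \mathrm{Cov}(f_x(x_0),f_x(x_k))$. The finite state space together with the ergodicity of Assumption \ref{ergodic ass} yields uniform (geometric) ergodicity, so the autocovariances decay geometrically, $|\gamma_k| \le C\rho^{k}$ with $\rho<1$ tied to the mixing time $t_{mix}$; hence the autocovariance sum is finite and $\var(\muhat(x)) \le C_1/H_{\mu}$ for a constant $C_1$ depending only on $t_{mix}$ and not on $H_{\mu}$.

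Combining the two steps and summing over the finite state space yields $\E_{\mu}[d_{TV}(\muhat,\mu)] \le \tfrac{1}{2}\sum_{x\in\X}\sqrt{C_1/H_{\mu}} = \tfrac{|\X|}{2}\sqrt{C_1/H_{\mu}}$, which I would package as the claimed bound $C(H_{\mu})$; applying Cauchy--Schwarz across states before taking square roots sharpens the $|\X|$ dependence to $\sqrt{|\X|}$. Either route shows $C(H_{\mu})\to 0$ as $H_{\mu}\to\infty$, at the rate $H_{\mu}^{-1/2}$ inherited from the standard-deviation step. The main obstacle is the variance estimate: unlike the i.i.d.\ case the samples $x_0,\dots,x_{H_{\mu}-1}$ are correlated, so the crux is invoking geometric ergodicity to guarantee summable autocovariances --- which is exactly where the finite-state and ergodicity hypotheses, and the mixing-time constant $t_{mix}$ already in play for Lemma \ref{paulin lemma}, do the work.
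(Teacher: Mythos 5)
Your argument is correct, but it takes a different route from the paper: the paper's entire proof is a citation to Proposition 3.21 of the Paulin concentration reference, whereas you reconstruct that result from scratch via the standard chain $d_{TV}(\muhat,\mu)=\tfrac12\sum_x|\muhat(x)-\mu(x)|$, unbiasedness under stationary initialization, Jensen to pass to $\sqrt{\var(\muhat(x))}$, and the variance-of-a-time-average bound $\var(\muhat(x))\le H_{\mu}^{-1}(\gamma_0+2\sum_{k\ge1}|\gamma_k|)$ with geometrically summable autocovariances supplied by finite-state ergodicity (Assumption \ref{ergodic ass}). Each step is sound, and the finite-state hypothesis is exactly what lets you sum over $x$ at the end; the Cauchy--Schwarz refinement to $\sqrt{|\X|}$ is also what the cited proposition effectively delivers, with the constant expressed through the (pseudo) spectral gap rather than your generic $C_1$. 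What the citation buys is brevity and sharper, explicitly mixing-time-dependent constants; what your derivation buys is self-containedness and transparency about where stationarity and ergodicity enter. One small point of friction with the statement as written: your bound (and Paulin's) decays like $H_{\mu}^{-1/2}$, not $H_{\mu}^{-1}$, so the lemma's phrase ``inversely proportional to $H_{\mu}$'' is an overstatement of the rate; this is harmless because the only property used downstream (in Lemma \ref{marg error} and Theorems \ref{theorem realization} and \ref{theorem exp}) is that $C(H_{\mu})\to 0$ as $H_{\mu}\to\infty$, which your bound certainly gives.
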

\vspace{1mm}
\begin{proof}
    See Proposition 3.16 in \cite{paulin2015concentration}.
\end{proof}
Note that $\E_{\mu} [d_{TV}(\muhat, \mu)]\to 0$ as $H_{\mu}\to\infty$.
\begin{remark}
    If an unbiased policy IS technique is used, then $\Vhat^{\phi}_{RH}(x)$ is an asymptotically unbiased estimator for $V^{\phi}_{RH}(x)$ as $H_{\mu}\to\infty$.
\end{remark}

\vspace{1mm}
\begin{lemma} \label{marg error}
    \textbf{(Error Produced by Empirical Marginalization)} Let $t_{mix}$ be the mixing time of $\mm$. Let $\muhat$ be the empirical stationary distribution computed from a trajectory of length $H_{\mu}$. The error produced by empirical marginalization can be bounded by,
    
    \ifarxiv
    \begin{align}
        P\left(\left|\E_{x_{-\N_W}}J_H(x)-\sum_{x_{-\N_W}}\muhat(x_{-\N_W})J_H(x)\right|>\epsilon\right)\leq 2\exp\left(-\left(\frac{\epsilon}{|\X_n|^{|W=1|}\sum_{x_{-\N_W}}J_H(x)}-C(H_{\mu})\right)^2\frac{H_{\mu}}{4.5t_{mix}}\right).
    \end{align}
    \else
    \small
    \begin{align}
    \begin{split}
        &P\left(\left|\E_{x_{-\N_W}}J_H(x)-\sum_{x_{-\N_W}}\muhat(x_{-\N_W})J_H(x)\right|>\epsilon\right)\\
        &\quad \leq 2\exp\left(\frac{-H_{\mu}}{4.5t_{mix}}\left(\frac{\epsilon}{|\X_n|^{|W=1|}\sum_{x_{-\N_W}}J_H(x)}-C(H_{\mu})\right)^2\right).
    \end{split}
    \end{align}
    \normalsize
    \fi
\end{lemma}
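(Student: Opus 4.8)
The plan is to express the marginalization error as a linear functional of the deviation $\mu(x_{-\N_W})-\muhat(x_{-\N_W})$ between the true and empirical marginals of the dropped agents, bound this by the total variation distance $d_{TV}(\muhat,\mu)$ of the full pre-dropout chain, and then close with the concentration inequality of Lemma \ref{paulin lemma} and the mean bound of Lemma \ref{bounded exp}.

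First, recalling \eqref{mu big} and \eqref{muhat big} and holding $x_{\N_W}$ fixed, I would write
\[
\E_{x_{-\N_W}}J_H(x)-\sum_{x_{-\N_W}}\muhat(x_{-\N_W})J_H(x)=\sum_{x_{-\N_W}}\bigl(\mu(x_{-\N_W})-\muhat(x_{-\N_W})\bigr)J_H(x).
\]
Taking absolute values inside the sum and then expanding each marginal as $\mu(x_{-\N_W})-\muhat(x_{-\N_W})=\sum_{x_{\N_W}}(\mu(x)-\muhat(x))$ gives the per-substate bound $|\mu(x_{-\N_W})-\muhat(x_{-\N_W})|\leq |\X_n|^{|W=1|}\max_{x'}|\mu(x')-\muhat(x')|$, where the factor $|\X_n|^{|W=1|}$ is exactly the number of configurations of the surviving agents summed over in $\xnot$. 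Since $\max_{x'}|\mu(x')-\muhat(x')|\leq d_{TV}(\muhat,\mu)$ and this quantity is constant in the outer sum, I obtain
\[
\left|\E_{x_{-\N_W}}J_H(x)-\sum_{x_{-\N_W}}\muhat(x_{-\N_W})J_H(x)\right|\leq |\X_n|^{|W=1|}\,d_{TV}(\muhat,\mu)\sum_{x_{-\N_W}}J_H(x).
\]

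Consequently, writing $\theta:=\epsilon/\bigl(|\X_n|^{|W=1|}\sum_{x_{-\N_W}}J_H(x)\bigr)$, the error exceeds $\epsilon$ only if $d_{TV}(\muhat,\mu)>\theta$. I would then center this event: by Lemma \ref{bounded exp}, $\E_\mu[d_{TV}(\muhat,\mu)]\leq C(H_{\mu})$, so for $\theta>C(H_{\mu})$ the event $\{d_{TV}(\muhat,\mu)>\theta\}$ is contained in $\{d_{TV}(\muhat,\mu)-\E_\mu[d_{TV}(\muhat,\mu)]>\theta-C(H_{\mu})\}$. Applying Lemma \ref{paulin lemma} with $\delta=\theta-C(H_{\mu})$ --- its hypotheses hold because the finite state space and Assumption \ref{ergodic ass} make the chain uniformly ergodic with a unique stationary distribution --- and substituting $\theta$ produces exactly the stated bound.

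The main obstacle is the second step: converting the marginal deviation into $d_{TV}(\muhat,\mu)$ of the full chain, which is where the $|\X_n|^{|W=1|}$ blow-up enters and where one must keep a single, consistent normalization of $d_{TV}$ so that Lemma \ref{paulin lemma} applies verbatim. A minor loose end is that the centering step requires $\theta>C(H_{\mu})$; the bound is therefore meaningful only once $H_{\mu}$ is large enough to push $C(H_{\mu})$ below this threshold, and outside that regime it should be read with the positive part of $\theta-C(H_{\mu})$.
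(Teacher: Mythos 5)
Your proposal is correct and follows essentially the same route as the paper's proof: bound the marginal deviation $|\mu(x_{-\N_W})-\muhat(x_{-\N_W})|$ by $|\X_n|^{|W=1|}d_{TV}(\muhat,\mu)$, center with $\E_{\mu}[d_{TV}(\muhat,\mu)]\leq C(H_{\mu})$ from Lemma \ref{bounded exp}, and invoke Lemma \ref{paulin lemma}. Your remark that the bound is only meaningful when $\theta>C(H_{\mu})$ is a caveat the paper leaves implicit, but it does not change the argument.
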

\begin{proof}
    \ifarxiv
    \begin{align}
        &P\left(\left|\E_{x_{-\N_W}}J_H(x)-\sum_{x_{-\N_W}}\muhat(x_{-\N_W})J_H(x)\right|>\epsilon\right)\nonumber\\
        &=P\left(\left|\sum_{x_{-\N_W}}\mu(x_{-\N_W})J_H(x)-\sum_{x_{-\N_W}}\muhat(x_{-\N_W})J_H(x)\right|>\epsilon
        \right),\nonumber\\
        &\leq P\left(\sum_{x_{-\N_W}}J_H(x)\left|\mu(x_{-\N_W})-\muhat(x_{-\N_W})\right|>\epsilon
        \right),\nonumber\\
        &= P\left(\sum_{x_{-\N_W}}J_H(x)\left|\sum_{\xnot}\mu(x)-\sum_{\xnot}\muhat(x)\right|>\epsilon
        \right),\nonumber\\
        &\leq P\left(\sum_{x_{-\N_W}}J_H(x)\sum_{\xnot}\left|\mu(x)-\muhat(x)\right|>\epsilon
        \right),\nonumber\\
        &\leq P\left(\sum_{x_{-\N_W}}J_H(x)|\X_n|^{|W=1|}d_{TV}(\mu,\muhat)>\epsilon
        \right),\nonumber\\
        &= P\left(d_{TV}(\mu,\muhat)>\frac{\epsilon}{|\X_n|^{|W=1|}\sum_{x_{-\N_W}}J_H(x)}
        \right),\nonumber\\
        &= P\left(d_{TV}(\mu,\muhat)-\E_{\mu} d_{TV}(\mu,\muhat)>\frac{\epsilon}{|\X_n|^{|W=1|}\sum_{x_{-\N_W}}J_H(x)}-\E_{\mu} d_{TV}(\mu,\muhat)
        \right),\nonumber\\
        &\leq P\left(d_{TV}(\mu,\muhat)-\E_{\mu} d_{TV}(\mu,\muhat)>\frac{\epsilon}{|\X_n|^{|W=1|}\sum_{x_{-\N_W}}J_H(x)}-C(H_{\mu})
        \right),\nonumber\\
        &\leq P\left(\left|d_{TV}(\mu,\muhat)-\E_{\mu} d_{TV}(\mu,\muhat)\right|>\frac{\epsilon}{|\X_n|^{|W=1|}\sum_{x_{-\N_W}}J_H(x)}-C(H_{\mu})
        \right)\label{last marg}
    \end{align}
    \else
    \small
    \begin{align}
        &P\left(\left|\E_{x_{-\N_W}}J_H(x)-\sum_{x_{-\N_W}}\muhat(x_{-\N_W})J_H(x)\right|>\epsilon\right)\nonumber\\
        &\leq P\left(\sum_{x_{-\N_W}}J_H(x)\left|\mu(x_{-\N_W})-\muhat(x_{-\N_W})\right|>\epsilon
        \right),\nonumber\\
        &= P\left(\sum_{x_{-\N_W}}J_H(x)\left|\sum_{\xnot}\mu(x)-\sum_{\xnot}\muhat(x)\right|>\epsilon
        \right),\nonumber\\
        &\leq P\left(\sum_{x_{-\N_W}}J_H(x)|\X_n|^{|W=1|}d_{TV}(\mu,\muhat)>\epsilon
        \right),\nonumber\\
        &= P\left(d_{TV}(\mu,\muhat)>\frac{\epsilon}{|\X_n|^{|W=1|}\sum_{x_{-\N_W}}J_H(x)}
        \right),\nonumber\\
        \begin{split}
            &\leq P\Bigg(\left|d_{TV}(\mu,\muhat)-\E_{\mu} d_{TV}(\mu,\muhat)\right|\\
            &\quad \quad>\frac{\epsilon}{|\X_n|^{|W=1|}\sum_{x_{-\N_W}}J_H(x)}-C(H_{\mu})
        \Bigg)\label{last marg}.
        \end{split}
    \end{align}
    \normalsize
    \fi

    Applying Lemma \ref{paulin lemma} to \eqref{last marg}, the final result can be obtained. 
    \end{proof}

\begin{theorem}\label{theorem realization}
\textbf{(Performance of Policy IS Estimate of Realized System)} Let $\Vhat_{RH}^{\pi}(x|W)$ be the estimated value of $(\mm|W)$ formed by estimating $\Jhat_H$ with policy IS and then marginalizing with respect to $\muhat(x_{-\N_W})$. Let $V_R^{\pi}(x|W)$ be the corresponding true value. Let the selected IS estimator have bounded bias $|J_H(x)-\E J_H(x)|\leq B_{IS}(H)$ and use a dataset of $|D|$ i.i.d. trajectories each of length $H$ for each $x$. Let the empirical stationary distribution $\hat\mu(x_{-\N_W})$ be formed from an additional trajectory of length $H_{\mu}$. Let $\rmax\triangleq\max_{x,\abold} r^R(x,\abold)$, and let $\epsilon'= \frac{\gamma^H}{1-\gamma}\rmax +B_{IS}(H)$. Then for $\delta\geq0$,

\ifarxiv
\begin{align}
&P(|V_R^{\pi}(x|W) - \Vhat_{RH}^{\pi}(x|W)|\geq \delta+\epsilon')\nonumber\\
&\leq 2\left(\exp\left(-\left(\frac{\delta}{2|\X_n|^{|W=1|}\sum_{x_{-\N_W}}J_H(x)}-C(H_{\mu})\right)^2\frac{H_{\mu}}{4.5t_{mix}}\right) + \exp\left(\frac{-|D|\delta^2}{4\Jmaxt^2} \right)\right).\label{bound1}
\end{align}
\else
\small
\begin{align}
&P(|V_R^{\pi}(x|W) - \Vhat_{RH}^{\pi}(x|W)|\geq \delta+\epsilon')\nonumber\\
&\leq 2\Bigg(\exp\left(\frac{-H_{\mu}}{4.5t_{mix}}\left(\frac{\delta}{2|\X_n|^{|W=1|}\sum_{x_{-\N_W}}J_H(x)}-C(H_{\mu})\right)^2\right) \label{bound1}\\
&+ \exp\left(\frac{-|D|\delta^2}{4\Jmaxt^2} \right)\Bigg)\nonumber.
\end{align}
\normalsize
\fi
\end{theorem}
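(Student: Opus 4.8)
The plan is to split the total error into a deterministic bias, which accounts for the offset $\epsilon'$, and a stochastic fluctuation, which is controlled by $\delta$ through two independent concentration arguments combined with a union bound. By Theorem \ref{dropout value function theorem} and \eqref{policy is vw}, the true value of the realization admits the representation $V_R^{\pi}(x|W)=\sum_{x_{-\N_W}}\mu(x_{-\N_W})J^{\pi}(x)$, i.e. the infinite-horizon robust value marginalized under the \emph{true} stationary distribution. I would insert three intermediate quantities between this and the estimate $\Vhat_{RH}^{\pi}(x|W)=\sum_{x_{-\N_W}}\muhat(x_{-\N_W})\Jhat_H(x)$: the true finite-horizon marginal $\sum_{x_{-\N_W}}\mu(x_{-\N_W})J_H(x)$, its expectation-of-estimate counterpart $\sum_{x_{-\N_W}}\mu(x_{-\N_W})\E[\Jhat_H(x)]$, and the realized estimate marginalized under the true distribution $\sum_{x_{-\N_W}}\mu(x_{-\N_W})\Jhat_H(x)$, and then apply the triangle inequality through this chain.

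This produces four gaps. The first is the horizon-truncation error $|\sum_{x_{-\N_W}}\mu(x_{-\N_W})(J^{\pi}(x)-J_H(x))|$, bounded deterministically by $\frac{\gamma^H}{1-\gamma}\rmax$ exactly as in the tail estimate closing the proof of Theorem \ref{marginalization theorem} (non-negative, bounded robust rewards, and $\mu$ a probability vector preserving the bound). The second is the IS bias $|\sum_{x_{-\N_W}}\mu(x_{-\N_W})(J_H(x)-\E[\Jhat_H(x)])|$, bounded deterministically by $B_{IS}(H)$ via the stated bias hypothesis and again that $\mu$ sums to one. Together these two deterministic gaps equal $\epsilon'$.

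The remaining two gaps are stochastic, so I would allocate the budget $\delta$ as $\delta/2$ to each and take a union bound. The marginalization gap $|\sum_{x_{-\N_W}}(\mu(x_{-\N_W})-\muhat(x_{-\N_W}))\Jhat_H(x)|$ is handled by Lemma \ref{marg error} applied at level $\delta/2$, which in turn invokes the Paulin concentration of Lemma \ref{paulin lemma} for the length-$H_{\mu}$ chain used to build $\muhat$; this yields the first exponential, with $\delta/2$ inside and $\sum_{x_{-\N_W}}J_H(x)$ in the denominator. The sampling gap $|\sum_{x_{-\N_W}}\mu(x_{-\N_W})(\Jhat_H(x)-\E[\Jhat_H(x)])|$ is handled by Hoeffding's inequality: writing $\Jhat_H(x)=\frac{1}{|D|}\sum_{j=1}^{|D|}G_j(x)$ for the per-trajectory reweighted returns, the marginalized quantity equals $\frac{1}{|D|}\sum_j Y_j$ with $Y_j=\sum_{x_{-\N_W}}\mu(x_{-\N_W})G_j(x)$, a convex combination of terms in $[0,\Jmaxt]$ and hence itself in $[0,\Jmaxt]$. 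As the $Y_j$ are i.i.d., Hoeffding at level $\delta/2$ gives the second exponential, and summing the two probabilities and extracting the shared factor of $2$ delivers the stated bound.

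The main obstacle will be routing the triangle inequality so the two stochastic contributions decouple cleanly: both $\muhat$ and $\Jhat_H$ are random, so one must evaluate the marginalization error against the \emph{realized} estimate while evaluating the sampling error against the \emph{true} marginal, matching each to an independent source of randomness (the mixing chain for $\muhat$ versus the $|D|$ i.i.d. rollouts for $\Jhat_H$). A secondary subtlety is that Lemma \ref{marg error} is stated for the deterministic $J_H$, so applying it with $\Jhat_H$ requires conditioning on the realized returns or absorbing the discrepancy into the same $\sum_{x_{-\N_W}}J_H(x)$ scaling already present in the denominator; the Hoeffding constant must likewise be tracked to land on the stated $4\Jmaxt^2$.
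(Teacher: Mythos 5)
Your proposal matches the paper's proof: the same four-term decomposition (horizon truncation plus IS bias giving the deterministic offset $\epsilon'$, then a $\delta/2$--$\delta/2$ union bound with Lemma \ref{marg error} controlling the marginalization term and Hoeffding controlling the IS term). The one obstacle you flag --- that both $\muhat$ and $\Jhat_H$ are random --- is resolved in the paper by routing the triangle inequality the other way: the $\mu-\muhat$ difference is weighted by the deterministic $J_H$ (so Lemma \ref{marg error} applies verbatim and produces the $\sum_{x_{-\N_W}}J_H(x)$ denominator appearing in the statement), while the $\E J_H(x)-\Jhat_H(x)$ difference is weighted by $\muhat$, which is harmless because $\muhat$ is a probability vector and that term is bounded pathwise by $\max_{x}|\E J_H(x)-\Jhat_H(x)|$ before Hoeffding is invoked.
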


\begin{proof}
The error can be decomposed as,
    \ifarxiv
    \begin{align*}
        &V_R^{\pi}(x|W) - \Vhat_{RH}(x|W)\\
        &=\E_{x_{-\N_W}}J(x)-\sum_{x_{-\N_W}}\muhat(x_{-\N_W})\Jhat_H(x),\\
        &=\E_{x_{-\N_W}}\left[J(x)-J_H(x)+J_H(x)\right]+\sum_{x_{-\N_W}}\muhat(x_{-\N_W})\left(-J_H(x)+J_H(x)-\E J_H(x) + J_H(x) - \Jhat_H(x)\right),\\
        \begin{split}
            &=\E_{x_{-\N_W}}[J(x)-J_H(x)]+\left[\E_{x_{-\N_W}}J_H(x)-\sum_{x_{-\N_W}}\muhat(x_{-\N_W})J_H(x)\right]\\
            &\quad+\sum_{x_{-\N_W}}\muhat(x_{-\N_W})(J_H(x)-\E J_H(x)) + \sum_{x_{-\N_W}}\muhat(x_{-\N_W})(\E J_H(x) - \Jhat_H(x)),
        \end{split}\\
        &\leq\Delta_H + \Delta_\mu + \Delta_B + \Delta_{IS},
    \end{align*}
    \else
    \small
    \begin{align*}
        &V_R^{\pi}(x|W) - \Vhat_{RH}(x|W)\\
        \begin{split}
            &=\E_{x_{-\N_W}}[J(x)-J_H(x)]+\Bigg[\E_{x_{-\N_W}}J_H(x)\\
            &\quad -\sum_{x_{-\N_W}}\muhat(x_{-\N_W})J_H(x)\Bigg]+\sum_{x_{-\N_W}}\muhat(x_{-\N_W})(J_H(x)-\E J_H(x)) \\
            &\quad + \sum_{x_{-\N_W}}\muhat(x_{-\N_W})(\E J_H(x) - \Jhat_H(x)),
        \end{split}\\
        &\leq\Delta_H + \Delta_\mu + \Delta_B + \Delta_{IS},
    \end{align*}
    \normalsize
    \fi

\normalsize
\noindent where $\Delta_H$ is the error of a trajectory of length $H$ to estimate the infinite horizon reward, $\Delta_{\mu}$ is the error due to empirical marginalization, $B_{IS}$ is the upper bound of the (possible) bias of the IS estimator, and $\Delta_{\text{IS}}$ is the error of the selected IS estimator. For any value function it is known that $|J(x)-J_H(x)|\leq \rmax \gamma^H/(1-\gamma) = \tilde{r}$ so $\Delta_H$ can be deterministically bounded. The stochastisity in the error therefore comes from the importance sampling step. With substitution, the triangle inequality, and noting that $\tilde{r}\geq 0$ and $B_{IS}\geq 0$,
\ifarxiv
\begin{align*}
    &P(|\Delta_{\text{H}}+\Delta_{\mu}+B_{IS}+ \Delta_{\text{IS}}|\geq \epsilon)\\
    &\leq P(|\Delta_{\mu}+\Delta_{\text{IS}}| \geq \epsilon-\tilde{r}-B_{IS}).
\end{align*}
\else
$P(|\Delta_{\text{H}}+\Delta_{\mu}+B_{IS}+ \Delta_{\text{IS}}|\geq \epsilon)\leq P(|\Delta_{\mu}+\Delta_{\text{IS}}| \geq \epsilon-\tilde{r}-B_{IS}).$
\fi

Define $\epsilon  = \delta +\tilde{r} + B_{IS}(H)$. Then,
\ifarxiv
\begin{align*}
    &=P(|\Delta_{\mu}+\Delta_{\text{IS}}| \geq \epsilon-\tilde{r}-B_{IS}(H))\\
    &=P(|\Delta_{\mu}+\Delta_{\text{IS}}| \geq \delta)\\
    &\leq P\left(|\Delta_{\mu}|\geq\frac{1}{2}\delta\right) + P\left(|\Delta_{IS}|\geq\frac{1}{2}\delta\right).
\end{align*}
\else
\small
\begin{align*}
    &=P(|\Delta_{\mu}+\Delta_{\text{IS}}| \geq \epsilon-\tilde{r}-B_{IS}(H))=P(|\Delta_{\mu}+\Delta_{\text{IS}}| \geq \delta)\\
    &\leq P\left(|\Delta_{\mu}|\geq\frac{1}{2}\delta\right) + P\left(|\Delta_{IS}|\geq\frac{1}{2}\delta\right).
\end{align*}
\normalsize
\fi

Next, bounding $\Delta_{\mu}$ can be accomplished via Lemma \ref{marg error}. 
\ifarxiv
\begin{align*}
    &P(|\Delta_{\mu}|\geq \epsilon_0)\\
    &=P\left(\left|\E_{x_{-\N_W}}J_H(x)-\sum_{x_{-\N_W}}\muhat(x_{-\N_W})J_H(x)\right|>\frac{1}{2}\delta\right),\\
    &\leq 2\exp\left(-\left(\frac{\delta}{2|\X_n|^{|W=1|}\sum_{x_{-\N_W}}J_H(x)}-C(H_{\mu})\right)^2\frac{H_{\mu}}{4.5t_{mix}}\right).
\end{align*}
\else
\small
\normalsize
\fi
Since the rewards are bounded and given Assumption \ref{support}, the estimates of $J$ produced by policy IS are bounded by $\Jmaxt$.
With i.i.d. samples of bounded random variables, a Hoeffding confidence bound may be used:
    \ifarxiv
    \begin{align*}
        &P(|\Delta_{IS}|>\epsilon_0)\\
        &=P\left(\sum_{x_{-\N_W}}\muhat(x_{-\N_W})|\E J_H(x) - \Jhat_H(x)|>\epsilon_0\right),\\
        &\leq P\left(\max_{x}|\E J_H(x) - \Jhat_H(x)|>\epsilon_0\right).
    \end{align*}
    \else
    $P(|\Delta_{IS}|>\epsilon_0)\leq P\left(\max_{x}|\E J_H(x) - \Jhat_H(x)|>\epsilon_0\right)$.
    \fi
    By the Hoeffding bound, every $P\left(|\E J_H(x) - \Jhat_H(x)|>\epsilon_0\right)\leq 2\exp\left(\frac{-|D|\epsilon_0^2}{\Jmaxt^2} \right)$. As this bounds holds for every $x$, it will hold for the max:
    \begin{align*}
        P\left(\max_{x}|\E J_H(x) - \Jhat_H(x)|>\epsilon_0\right)&\leq 2\exp\left(\frac{-|D|\epsilon_0^2}{\Jmaxt^2} \right).
    \end{align*}
    \ifarxiv
    Therefore,
    \begin{align*}
        P\left(|\Delta_{IS}|>\frac{1}{2}\delta\right)&\leq 2\exp\left(\frac{-|D|\delta^2}{4\Jmaxt^2} \right).
    \end{align*}
    \else
    Therefore, the result holds.
    \fi
\end{proof}

\bluethree{Theorem \ref{theorem realization} addresses Problem 1 by showing that the values of post-dropout policies for specific realizations may be effectively estimated. The estimator is constructed from finite trajectories subject to IS and empirical marginalization, and can be described by an exponential confidence interval.} The first term arises due to marginalization and the second due to the IS estimator. The IS error will go to zero as $|D|\to\infty$, and the marginalization error will go to zero as $H_{\mu}\to\infty$. Note that $H \ll H_{\mu}$ is beneficial as the variance of standard IS estimators increases rapidly with $H$, but $H_{\mu}$ must be large for the empirical distribution to reflect the true stationary distribution. 

This bound justifies the marginalized IS estimator as a valid technique as $\Vhat_R(x|W)$ is exponentially concentrated around $V_R(x|W)$. It is possible to obtain more sophisticated bounds for the IS estimator (see \cite{thomas2015high}), but these bounds are often pessimistic for practical applications. A discussion of normal approximations for IS estimator bounds can be found in \cite{thomas2016data}.

\begin{theorem} \label{theorem exp}
    \textbf{(Performance of Policy IS Estimate of Robust System)} Let $\Vhat_{RH}^{\pi_R}(x)$ be the estimated value of the robust MDP for policy $\pi_R$, and let $V_R^{\pi_R}(x)$ be the corresponding true value. Let the selected IS estimator have bounded bias $|J_H(x)-\E J_H(x)|\leq B_{IS}(H)$ and use a dataset of $|D|$ i.i.d. trajectories each of length $H$ for each $x$. Let the empirical stationary distribution $\muhat$ be formed from an additional trajectory of length $H_{\mu}$. Let $\rmax\triangleq\max_{x,\abold} r^R(x,\abold)$, and let $\epsilon'= \frac{\gamma^H}{1-\gamma}\rmax +B_{IS}(H)$. Then for $\delta\geq0$,
    
    \ifarxiv
    \begin{align}
        &P(|V_R^{\pi_R}(x) - \Vhat_{RH}^{\pi_R}(x)|>\delta+\epsilon')\nonumber\\
        &\leq 2\Bigg(\exp\left(-\left(\frac{\delta}{2|\X|V^{\max}_H}-C(H_{\mu})\right)^2\frac{H_{\mu}}{4.5t_{mix}}\right)+\exp\left(\frac{-|D|\delta^2}{4\Jmaxt^2} \right)\Bigg).
    \end{align}
    \else
    \small\vspace{-3mm}
    \begin{align}
        &P(|V_R^{\pi_R}(x) - \Vhat_{RH}^{\pi_R}(x)|>\delta+\epsilon')\nonumber\\
        &\leq 2\Bigg(\exp\left(\frac{-H_{\mu}}{4.5t_{mix}}\left(\frac{\delta}{2|\X|V^{\max}_H}-C(H_{\mu})\right)^2\right)\\
        &+\exp\left(\frac{-|D|\delta^2}{4\Jmaxt^2} \right)\Bigg).\nonumber
    \end{align}
    \normalsize
    \fi
\end{theorem}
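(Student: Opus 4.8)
The plan is to mirror the proof of Theorem \ref{theorem realization}, exploiting the fact that the robust quantities are convex combinations of the realization quantities over the dropout law. By Theorem \ref{dropout value function theorem} the true value decomposes as $V_R^{\pi_R}(x) = \sum_W p(W) V_R^{\pi_R}(x|W)$, and by construction $\Vhat_{RH}^{\pi_R}(x) = \sum_W p(W)\Vhat_H^{\pi_R}(x|W)$, so the total error equals $\sum_W p(W)[V_R^{\pi_R}(x|W) - \Vhat_H^{\pi_R}(x|W)]$. First I would insert the same four-term split used in Theorem \ref{theorem realization} inside each summand, yielding aggregated terms $\Delta_H$, $\Delta_\mu$, $\Delta_B$, and $\Delta_{IS}$, each now averaged against $p(W)$.

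Since $\sum_W p(W) = 1$, the two deterministic pieces collapse immediately: $|\Delta_H| \leq \rmax\gamma^H/(1-\gamma)$ and $|\Delta_B|\leq B_{IS}(H)$, exactly as before, and together these constitute $\epsilon'$. Setting $\epsilon = \delta + \epsilon'$ and applying the triangle inequality, the problem reduces to controlling $P(|\Delta_\mu|\geq \tfrac12\delta) + P(|\Delta_{IS}|\geq\tfrac12\delta)$ through a union bound.

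The main obstacle, and the only place where the bound differs from Theorem \ref{theorem realization}, is the marginalization term. For a fixed $W$, repeating the chain of inequalities in Lemma \ref{marg error} gives $|\E_{x_{-\N_W}}J_H(x) - \sum_{x_{-\N_W}}\muhat(x_{-\N_W})J_H(x)| \leq \sum_{x_{-\N_W}}J_H(x)\,|\X_n|^{|W=1|}\, d_{TV}(\mu,\muhat)$. The key simplification is to replace the $W$-dependent prefactor by a single uniform constant: invoking Lemma \ref{vmax lemma} to bound $J_H(x)\leq V^{\max}_H$ and counting that $\sum_{x_{-\N_W}}|\X_n|^{|W=1|} = |\X_n|^{|W=0|}\cdot|\X_n|^{|W=1|} = |\X|$, each per-$W$ marginalization error is at most $|\X|V^{\max}_H\, d_{TV}(\mu,\muhat)$. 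Because this bound no longer depends on $W$, the convex combination $\sum_W p(W)(\cdot)$ leaves it unchanged, so $|\Delta_\mu|\leq |\X|V^{\max}_H\, d_{TV}(\mu,\muhat)$. Subtracting $\E_\mu d_{TV}(\mu,\muhat)$, bounding it by $C(H_\mu)$ via Lemma \ref{bounded exp}, and applying the concentration inequality of Lemma \ref{paulin lemma} then yields the first exponential term with denominator $2|\X|V^{\max}_H$.

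Finally, the importance-sampling term is handled exactly as in Theorem \ref{theorem realization}: since both $p(W)$ and $\muhat$ are probability distributions summing to one, $|\Delta_{IS}| \leq \max_x |\E J_H(x) - \Jhat_H(x)|$, and the boundedness of the IS estimates by $\Jmaxt$ (Assumptions \ref{as:reward} and \ref{support}) lets a Hoeffding bound over the $|D|$ i.i.d. trajectories control each $x$, hence the maximum, producing the second exponential term. Combining the two via the union bound gives the stated inequality. I expect the counting identity $\sum_{x_{-\N_W}}|\X_n|^{|W=1|} = |\X|$, together with the uniform-in-$W$ replacement furnished by Lemma \ref{vmax lemma}, to be the crux, as it is precisely what collapses the realization-dependent prefactor into the clean $|\X|V^{\max}_H$ appearing in the statement.
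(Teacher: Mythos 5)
Your proposal is correct and follows essentially the same route as the paper's proof: the same decomposition of the error as an expectation over $W$ of the four per-realization terms, the same union bound on the $\Delta_\mu$ and $\Delta_{IS}$ pieces, and the same key step of collapsing the $W$-dependent prefactor via Lemma \ref{vmax lemma} and the counting identity $|\X_n|^{|W=1|}\cdot|\X_n|^{|W=0|}=|\X|$ to obtain the uniform denominator $2|\X|V^{\max}_H$. The only cosmetic difference is that you argue the convex combination is preserved because the per-$W$ bound is uniform, whereas the paper passes through $\max_W$ first; these are the same observation.
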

\begin{proof}
    The steps of this proof follow similarly to those of the proof of Theorem \ref{theorem realization} except now with the outer expectation over $W$. As it is assumed the distribution for $W$ is known, no additional error should be incurred from that step.
    \ifarxiv
    \begin{align*}
        &|V_R^{\pi_R}(x) - \Vhat_{RH}^{\pi_R}(x)|\\
        &=\E_W|V_R^{\pi_W}(x|W) - \Vhat_{RH}^{\pi_W}(x|W)|,\\
        &\leq\E_W|\Delta_H^W + \Delta_\mu^W + \Delta_B^W + \Delta_{IS}|.
    \end{align*}
    \else
    $|V_R^{\pi_R}(x) - \Vhat_{RH}^{\pi_R}(x)|=\E_W|V_R^{\pi_W}(x|W) - \Vhat_{RH}^{\pi_W}(x|W)|\leq\E_W|\Delta_H^W + \Delta_\mu^W + \Delta_B^W + \Delta_{IS}|$.
    \fi

    Note that $\tilde{r}$ and $B_{IS}(H)$ are both deterministic upper bounds over all $W$. Then,
    \ifarxiv
    \begin{align*}
        &P(\E_W|\Delta_{\text{H}}^W+\Delta_{\mu}^W+\Delta_B^W+ \Delta_{\text{IS}}^W|\geq \epsilon)\\
        &\leq P(\E_W|\tilde{r}| + \E_W|B_{IS}|+ \E_W|\Delta_{\mu}^W+\Delta_{\text{IS}}^W|\geq\epsilon),\\
        &=P(\E_W|\Delta_{\mu}^W+\Delta_{\text{IS}}^W| \geq \delta),\\
        &\leq P\left(\max_W|\Delta_{\mu}^W|\geq \frac{1}{2}\delta\right)+P\left(\max_W|\Delta_{IS}^W|\geq \frac{1}{2}\delta\right).
    \end{align*}
    \else
    $P(\E_W|\Delta_{\text{H}}^W+\Delta_{\mu}^W+\Delta_B^W+ \Delta_{\text{IS}}^W|\geq \epsilon)\leq P\left(\max_W|\Delta_{\mu}^W|\geq \frac{1}{2}\delta\right)+P\left(\max_W|\Delta_{IS}^W|\geq \frac{1}{2}\delta\right).$
    \fi
    The established Hoeffding bound holds for all $\Delta_{IS}^W$: $P\left(|\Delta_{IS}^W|>\frac{1}{2}\delta\right)\leq 2\exp\left(\frac{-|D|\delta^2}{4\Jmaxt^2} \right)$. Therefore, 
    \begin{align*}
        P\left(\max_W|\Delta_{IS}^W|>\frac{1}{2}\delta\right)&\leq 2\exp\left(\frac{-|D|\delta^2}{4\Jmaxt^2} \right).
    \end{align*}
    To bound the error from marginalization, first note that $\sum_{x_{-\N_W}}J_H(x)\leq |\X_n|^{|W=0|}V^{\max}_H$ by Lemma \ref{vmax lemma}. Then,
    \ifarxiv
    \begin{align}
        P\left(\left|\Delta_{\mu}^W\right|>\frac{1}{2}\delta\right)&\leq 2\exp\left(-\left(\frac{\delta}{2|\X_n|^{|W=1|}\sum_{x_{-\N_W}}J_H(x)}-C(H_{\mu})\right)^2\frac{H_{\mu}}{4.5t_{mix}}\right),\\
        &\leq 2\exp\left(-\left(\frac{\delta}{2|\X|V^{\max}_H}-C(H_{\mu})\right)^2\frac{H_{\mu}}{4.5t_{mix}}\right).
    \end{align}
    \else
    \small
    \begin{align*}
        &P\left(\left|\Delta_{\mu}^W\right|>\frac{1}{2}\delta\right)\leq 2\exp\left(\left(\frac{\delta}{2|\X|V^{\max}_H}-C(H_{\mu})\right)^2\frac{-H_{\mu}}{4.5t_{mix}}\right).
    \end{align*}
    \normalsize
    \fi
    As the bound holds for all $|\Delta_{\mu}^W|$, it will hold for the maximum.
    \ifarxiv
    \begin{align*}
         P\left(\max_W|\Delta_{IS}^W|\geq \frac{1}{2}\delta\right) &\leq 2 \exp\left(-\left(\frac{\delta}{2|\X|V^{\max}_H}-C(H_{\mu})\right)^2\frac{H_{\mu}}{4.5t_{mix}}\right).
    \end{align*}
    \else
    Therefore, the result holds.
    \fi
\end{proof}

\bluethree{Theorem \ref{theorem exp} addresses Problem 2 by showing that the values of robust post-dropout policies can be effectively estimated within an exponential confidence bound. }In comparison to Theorem \ref{theorem realization}, the robust value function requires an additional expectation over the dropout probabilities. As this distribution is known \emph{a priori} no error from this operation is introduced. The denominators of the terms vary as the marginalization must now be taken over the whole state space rather than a subset. This result shows that the robust value function estimator may be computed without incurring significantly more error than the estimator for one dropout realization.

\section{Simulations} \label{sec: sims}
\subsection{Robust Policies}
The first experiment, shown in Figure \ref{fig:robust policy}, examines the loss in value incurred when the system undergoes agent dropout. This experiment utilizes the full model to demonstrate the utility of the robust policy before estimation is considered. Four agents are controlled under an optimal policy for 500 time steps. The rewards assigned to the agents are equally weighted. The sample return to the current time is shown in green; under no intervention, this estimate will approximate the true value. At $t=500$, however, the system is disturbed and loses two of the agents. If the CP finds a policy optimal for the post-dropout system, then they can achieve about 42\% of pre-dropout value. If the policy is not adapted and the optimal pre-dropout policy is used, then the return drops to 22\% of the pre-dropout value. This is because the pre-dropout policy does not address the transition matrices marginalizing the removed agents.

The loss in post-dropout value is then compared to the performance of the optimal robust policy, calculated for $\beta_n=0.5$ for all agents. Note that when dropout occurs, the return of the robust policy is 95\% of the optimal pre-dropout value. This shows that while the robust policy is suboptimal, the loss is negligible in this experiment. Bounds for this optimality gap are shown in Figure \ref{fig: gap} for varying values of $\beta$. The true loss is displayed with the theoretical upper bound from Lemma \ref{opt gap}. 

The benefits of the robust policy are shown after dropout occurs; the robust policy yields 37\% of the pre-dropout optimal value, a difference of five points from the optimal post-dropout policy, and an improvement of fifteen points over the optimal pre-dropout policy. This shows that the robust policy is better at automatically adjusting to any realization of the system than policies designed for the pre-dropout system.

To demonstrate the robust policy in a general setting, the results averaged over 1000 randomly generated systems are reported in Figure \ref{fig:all betas}. This experiment considers a system of five agents where $|\X_n| = 2$ and $|\A_n| = 2$, and all possible dropout combinations were considered. For each number of lost agents, the robust policy was found with the corresponding value of $\beta$; for example, all combinations with one lost agent used $\beta_n = 0.2$. The result in red is the loss in value between the optimal post-dropout policy and the optimal pre-dropout policy. The star shows the average loss across all dropout combinations, and the error bars report the minimum and maximum.  The corresponding results in black show the value loss between the optimal post-dropout policy and the optimal robust policy in the post-dropout regime. In comparing the red and black results, note the robust policy performs better in the average for $\beta \in \{0.2, 0.4\}$ and better in the maximum for $\beta\in \{0.2, 0.4, 0.6\}$. This makes sense  as a post-dropout system with fewer dropped agents will be more similar to the pre-dropout system, so the optimal pre-dropout policy should yield less loss. In comparison, the robust policy needs to perform well over the average of all possible dropout combinations, so it may have higher loss for a specific realization for high $\beta$. When the number of dropped agents is high, the pre- and post-dropout systems are less similar, so the robust policy out-performs the pre-dropout policy. 

The last data point is the dot, which is the loss on the value function incurred by controlling the pre-dropout system with the robust policy instead of the optimal pre-dropout policy. There are no error bars as this metric is reported only in the pre-dropout regime. These results show that controlling the existing system with the robust policy results in at most a 10\% loss in performance from the optimum. This experiment shows that the robust policy can be a promising strategy for systems that undergo agent dropout. 

\ifarxiv
\begin{figure}
    \centering
    \includegraphics[width = 0.6\linewidth]{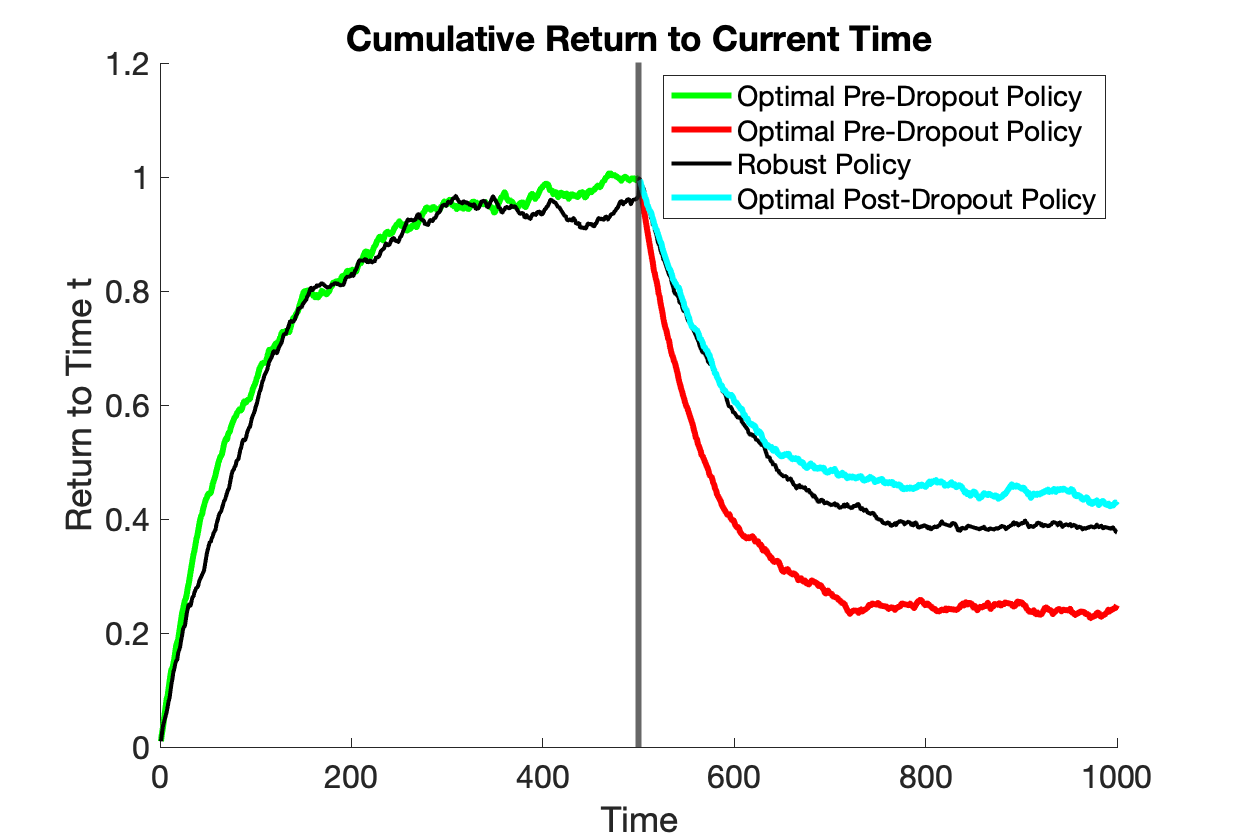}
    \caption{Experimental comparison of various policies on the pre- and post-dropout system. \blue{$N=4$, $|\X_n| = 2$, $|\mathcal{A}_n|=2$. Note that the robust policy performs almost as well as the optimal pre-dropout policy for $t<500$, and is much better than the optimal pre-dropout policy for $t>500$. In addition, the central planner can pre-compute the robust policy with policy IS.}}
    \label{fig:robust policy}
\end{figure}
\else
\begin{figure}
    \centering
    \includegraphics[width = 0.65\linewidth]{all-policies2.png}
    \caption{Experimental comparison of various policies on the pre- and post-dropout system. \blue{$N=4$, $|\X_n| = 2$, $|\mathcal{A}_n|=2$. Note that the robust policy performs almost as well as the optimal pre-dropout policy for $t<500$, and is much better than the optimal pre-dropout policy for $t>500$. In addition, the central planner can pre-compute the robust policy with policy IS.}}
    \label{fig:robust policy}
\end{figure}
\fi

\ifarxiv
\begin{figure}
    \centering
    \includegraphics[width = 0.6\linewidth]{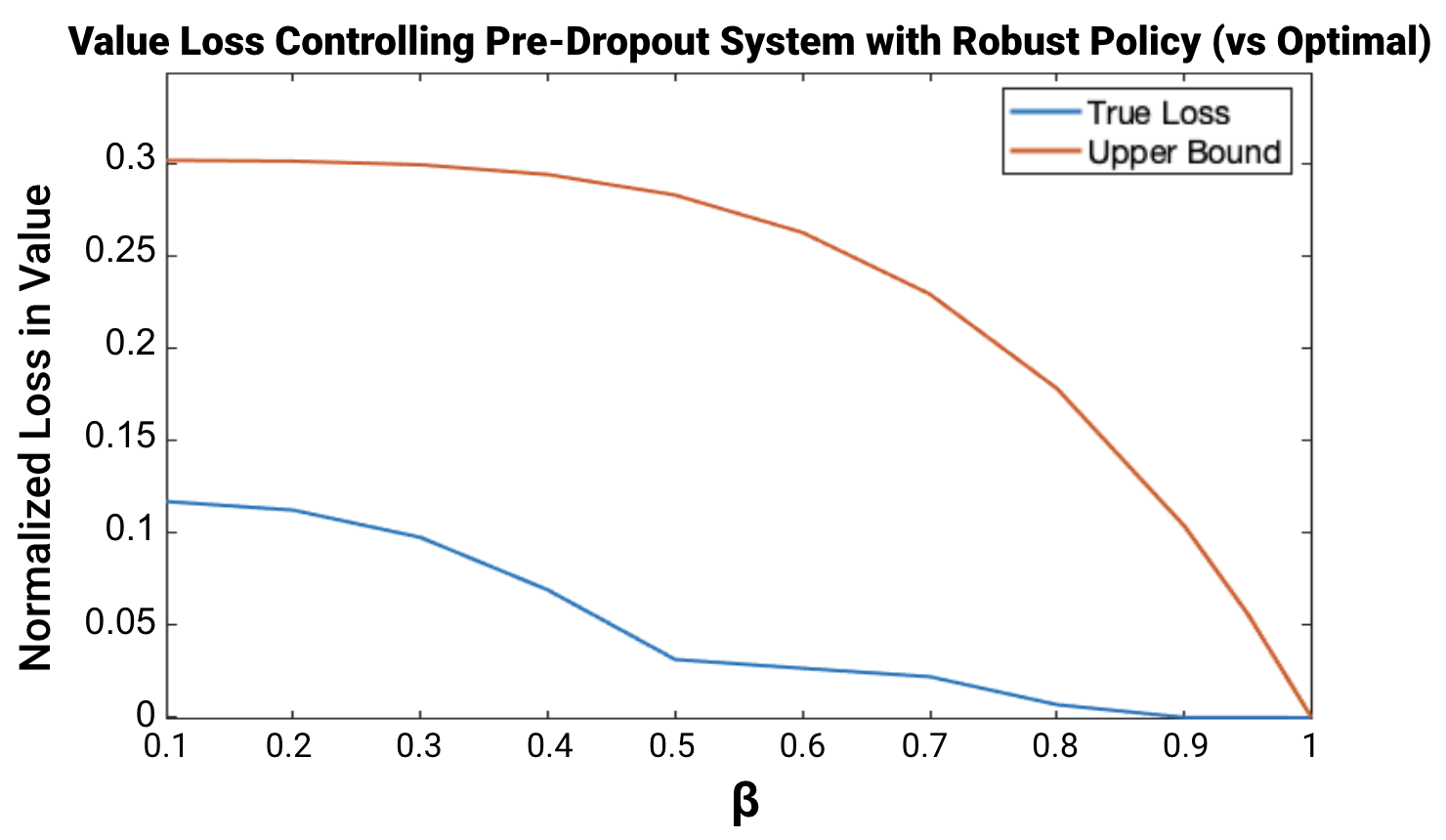}
    \caption{\blue{Experimental demonstration of the optimality gap and associated upper bound (Lemma \ref{opt gap}). $N=4$, $|\X_n| = 3$, $|\A_n| = 3$, and the rewards were assigned such that $|r_n(x_n,\alpha_n|w_n=1)| \leq 1$ and $r_n(x_n,\alpha_n|w_n=0) = 0$.}}
    \label{fig: gap}
\end{figure}
\else
\begin{figure}
    \centering
    \includegraphics[width = 0.65\linewidth]{loss.png}
    \caption{\blue{Experimental demonstration of the optimality gap and associated upper bound (Lemma \ref{opt gap}). $N=4$, $|\X_n| = 3$, $|\A_n| = 3$, and the rewards were assigned such that $|r_n(x_n,\alpha_n|w_n=1)| \leq 1$ and $r_n(x_n,\alpha_n|w_n=0) = 0$.}}
    \label{fig: gap}
\end{figure}
\fi

\ifarxiv
\begin{figure}
    \centering
    \includegraphics[width = 0.8\linewidth]{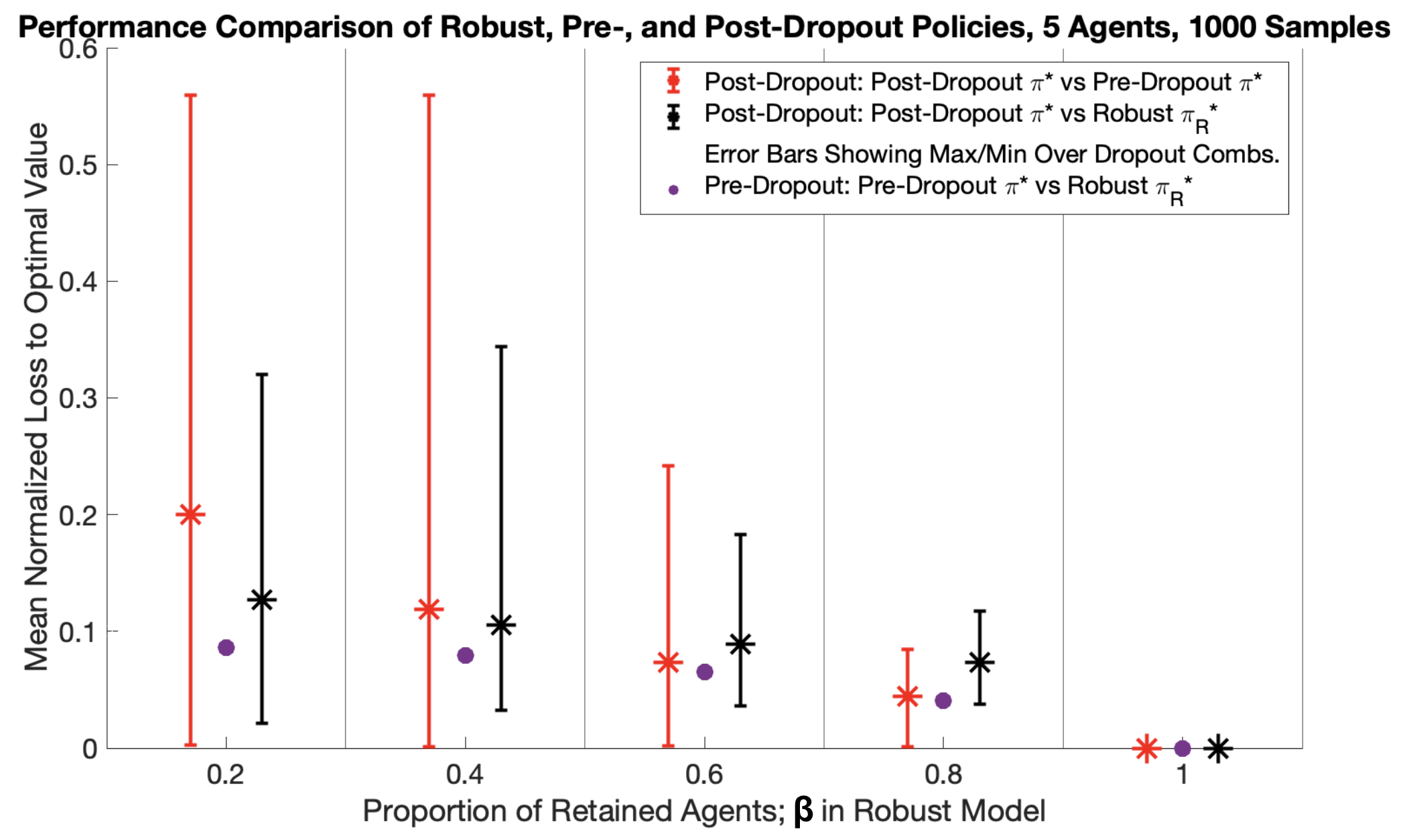}
    \caption{\blue{Empirical losses produced by the robust policy and optimal pre-dropout policy on post-dropout system. $N=5$, $|\X_n| = 2$, $|\A_n|=2$, averaged across all dropout combinations, and 1000 experiments. Error bars show max/min loss over the dropout combinations. The robust policy performs better on average than the pre-dropout policy when for less than half of the agents, and it performs better in the maximum for $\beta\in\{0.2, 0.4, 0.6\}$.}}
    \label{fig:all betas}
\end{figure}
\else
\begin{figure}
    \centering
    \includegraphics[width = 0.8\linewidth]{policy-compbold.png}
    \caption{\blue{Empirical losses produced by the robust policy and optimal pre-dropout policy on post-dropout system. $N=5$, $|\X_n| = 2$, $|\A_n|=2$, averaged across all dropout combinations, and 1000 experiments. Error bars show max/min loss over the dropout combinations. The robust policy performs better on average than the pre-dropout policy when for less than half of the agents, and it performs better in the maximum for $\beta\in\{0.2, 0.4, 0.6\}$.}}
    \label{fig:all betas}
\end{figure}
\fi

\subsection{Importance Sampling}
\blue{The next experiment looked at estimating robust policies from data. The CP gathered data under $\pi^*$ optimal for $W = \mathbf{1}$, and then used the proposed policy IS method to evaluate a candidate robust policy. The policy IS routine was implemented with a first-visit doubly \bluetoo{robust} estimator \cite{jiang2016doubly}, $|D|=100$,  $H=500$, and $H_{\mu} = 5000$. The results are in Figure \ref{fig:need is}.}

\blue{The results demonstrate how the candidate robust policy can be estimated accurately. The results are compared to the pre-dropout system to show that the robust value can be estimated while controlling the existing system with its optimal policy. The horizontal dashed line shows the true value of the policy normalized to one, and the solid green horizontal line shows the estimated value of the same policy as found by the policy IS routine. The vertical red line shows that if the candidate policy was evaluated by directly controlling the existing system and observing the sample return, then 300 time steps would be needed to reach 95\% of the true value. Works such as \cite{thomas2015high} have investigated how to perform the policy search and improvement steps.} 

\ifarxiv
\begin{figure}
    \centering
    \includegraphics[width=0.7\linewidth]{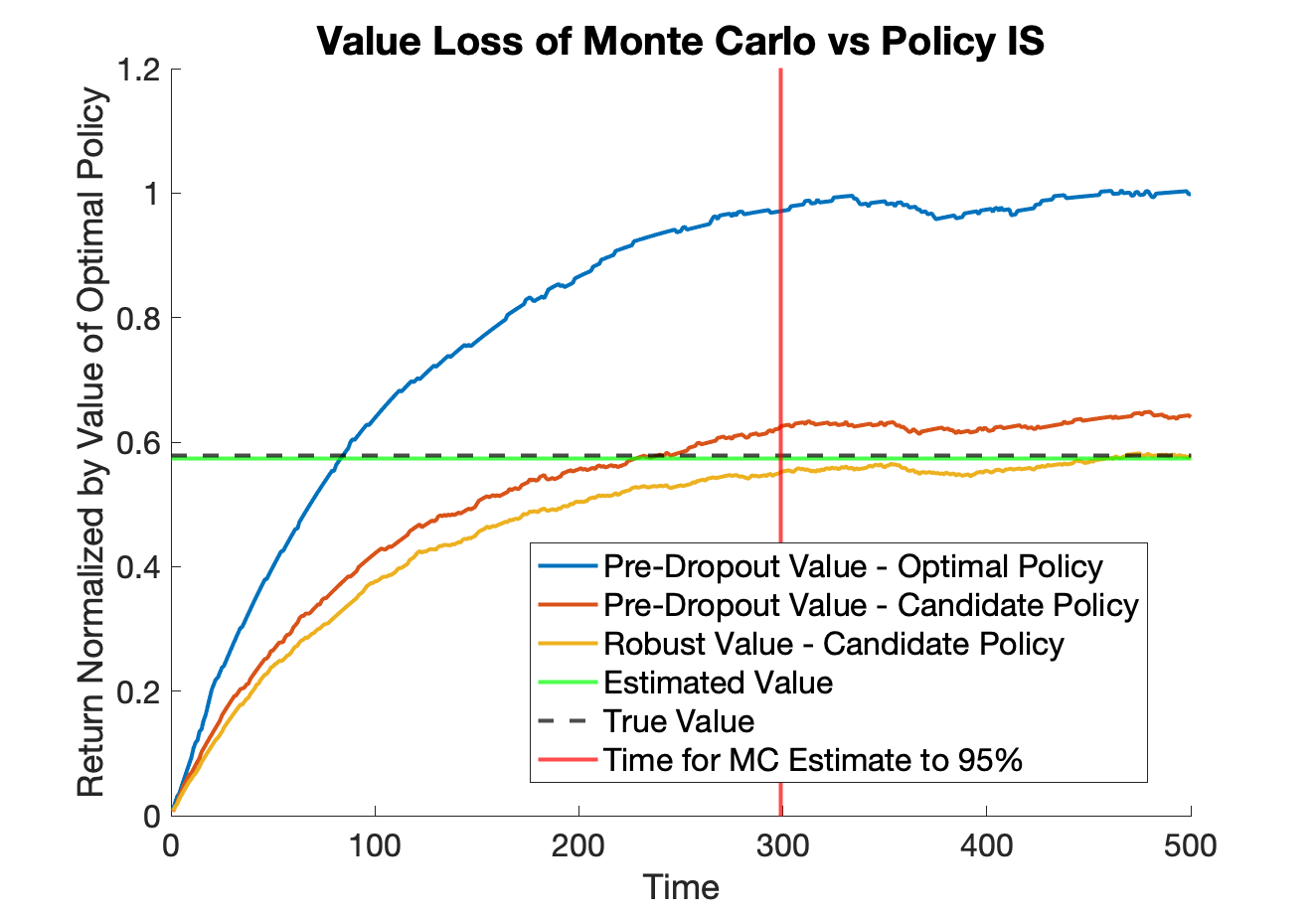}
    \caption{\blue{Performance of policy IS to evaluate a candidate policy. $N=5$, $|\X_n|=2$, $|\mathcal{A}_n|=2$, $|D|=100$,  $H=500$, and $H_{\mu} = 5000$. Note that the policy IS estimator approximates the true robust value. The candidate policy value of the robust MDP is close to the pre-dropout MDP. This particular candidate policy on the robust system achieves  $\sim$60\% of the optimal pre-dropout MDP value. }}
    \label{fig:need is}
\end{figure}
\else
\begin{figure}
    \centering
    \includegraphics[width=0.8\linewidth, height=4.5cm]{IS.png}
    \caption{\blue{Performance of policy IS to evaluate a candidate policy. $N=5$, $|\X_n|=2$, $|\mathcal{A}_n|=2$, $|D|=100$,  $H=500$, and $H_{\mu} = 5000$. Note that the policy IS estimator approximates the true robust value. The candidate policy value of the robust MDP is close to the pre-dropout MDP. This particular candidate policy on the robust system achieves  $\sim$60\% of the optimal pre-dropout MDP value. }}
    \label{fig:need is}
\end{figure}
\fi

\ifarxiv\else\vspace{-2mm}\fi\section{Conclusion}
\blue{In this paper we consider a multi-agent MDP that undergoes agent dropout. For both computational concerns and to find good control policies before dropout occurs, we adopt a stochastic agent dropout model and propose a robust value function that optimizes for the expected system composition.  This model produces two key takeaways: the development of (1) the robust policy framework, which enables the attainment of good value across different dropout realizations, and (2) a statistical method for estimating such policies before dropout occurs. To complete these objectives, we propose a policy IS method that can be computed from pre-dropout observations; this means that potential robust or post-dropout policies can be evaluated before dropout occurs, which means that good control policies can be found before dropout occurs. We prove that this method converges with high probability, up to a possible bias term produced from importance sampling. Experiments validate that the robust policy can perform well if dropout occurs.}

\blue{Future work can consider the composition of $\beta$ to decide when it is beneficial to use the robust policy, which would depend on both the expected value and the needed computation. Second, the assumption of $\beta$ known \emph{a priori} could be relaxed, necessitating estimation of both the policy and the expected system. In addition, the results could be expanded to consider nodes added to the system, which will require exploration time to learn the unknown behavior.}

\bibliographystyle{IEEEtran}
\ifarxiv\else\vspace{-3mm}\fi\bibliography{IEEEabrv,main}

\ifarxiv
\else

\begin{wrapfigure}{l}{0.35\linewidth}
  \begin{center}
    \includegraphics[width=\linewidth]{cf.png}
  \end{center}
\end{wrapfigure}
\textbf{Carmel Fiscko} is a postdoctoral associate in Systems Engineering at Cornell Univeristy. She received her Ph.D. in Electrical and Computer Engineering at Carnegie Mellon University in 2023, where she also received her M.S. in 2019. She received her B.S. in Electrical Engineering in 2017 from the University of California at San Diego. She was selected as a 2019 National Science Foundation Graduate Research Fellow. Her research interests are in multi-agent reinforcement learning, networked systems, and machine learning.\\

\begin{wrapfigure}{l}{0.3\linewidth}
  \begin{center}
    \includegraphics[width=\linewidth]{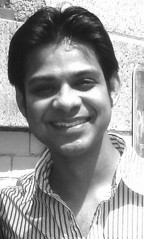}
  \end{center}
\end{wrapfigure}
\textbf{Soummya Kar} received a B.Tech. in electronics and electrical communication engineering from the Indian Institute of Technology, Kharagpur, India, in May 2005 and a Ph.D. in electrical and computer engineering from Carnegie Mellon University, Pittsburgh, PA, in 2010. From June 2010 to May 2011, he was with the Electrical Engineering Department, Princeton University, Princeton, NJ, USA, as a Postdoctoral Research Associate. He is currently a Professor of Electrical and Computer Engineering at Carnegie Mellon University, Pittsburgh, PA, USA. His research interests include signal processing and decision-making in
large-scale networked  systems, machine learning, and stochastic analysis, with applications in cyber-physical systems and smart energy systems. He is a Fellow of the IEEE.

\begin{wrapfigure}{l}{0.35\linewidth}
  \begin{center}
    \includegraphics[width=\linewidth]{bruno.jpeg}
  \end{center}
\end{wrapfigure}
\textbf{Bruno Sinopoli} is the Das Family Distinguished Professor and Chair of the Preston M. Green Department of Electrical \& Systems Engineering at the McKelvey school of Engineering at Washington University in St Louis.  Prior to joining Washington University, he was a professor in the Electrical and Computer Engineering Department at Carnegie Mellon University from 2007 to 2019, with courtesy appointments in the Robotics Institute and the Mechanical Engineering Department and co-director of the Smart Infrastructure Institute. Previously, he was a postdoctoral fellow at the University of California, Berkeley and Stanford University from 2005 to 2007. Dr. Sinopoli received his M.S. and Ph.D in Electrical Engineering at the University of California at Berkeley, in 2003 and 2005 respectively and his Laurea from the University of Padova in Italy. His research focuses on robust and resilient design of cyber-physical systems, networked and distributed control systems, distributed interference in networks, smart infrastructures, wireless sensor and actuator networks, cloud computing, adaptive video streaming applications, and energy systems.
\fi

\end{document}